\documentclass[onecolumn,10pt]{IEEEtran}

\usepackage{amsfonts,amsmath,amssymb,amsthm,caption,algorithmic}
\usepackage{amsbsy}
\usepackage{graphicx,multirow,bm}
\usepackage{color}
\usepackage{subfigure}
\parskip 5pt
\makeatletter
\newtheoremstyle{mythm}{3pt}{3pt}{}{16pt}{\bfseries}{:}{.5em}{}
\theoremstyle{mythm}
\newtheorem{theorem}{Theorem}
\setcounter{theorem}{0}
\newtheorem{example}{Example}
\newtheorem{definition}{Definition}

\newtheorem{fact}{Fact}

\newtheorem{corollary}{Corollary}
\newtheorem{lemma}{Lemma}
\newtheorem{construction}{Construction}
\begin{document}
\title{The Optimal Placement Delivery Arrays
\author{Minquan Cheng, Jing Jiang, Qifa Yan, Xiaohu Tang, \IEEEmembership{Member,~IEEE} and Haitao Cao
}
\thanks{M. Cheng and J. Jiang are with Guangxi Key Lab of Multi-source Information Mining $\&$ Security, Guangxi Normal University,
Guilin 541004, China (e-mail: $\{$chengqinshi,jjiang2008$\}$@hotmail.com).}
\thanks{Q. Yan and X. Tang are with the Information Security and National Computing Grid Laboratory,
Southwest Jiaotong University, Chengdu, 610031, China (e-mail: qifa@my.swjtu.edu.cn, xhutang@swjtu.edu.cn).}
\thanks{H. Cao is with the Institute of Mathematics, Nanjing Normal University, Nanjing 210023, China (e-mail: caohaitao@njnu.edu.cn).}
}
\date{}
\maketitle

\begin{abstract}
In wireless networks, coded caching  is an effective technique to reduce network congestion during peak traffic times.
Recently, a new concept called placement delivery array (PDA) was proposed to characterize the  coded caching scheme.
So far, only one class of PDAs by Maddah-Ali and Niesen is known to be optimal. In this paper, we mainly focus on constructing optimal PDAs. Firstly,
we derive some lower bounds. Next, we present several infinite classes of PDAs, which are shown to be optimal with respect to the new bounds.
\end{abstract}
\begin{IEEEkeywords}
Coded caching scheme, placement delivery array, lower bound, optimal.
\end{IEEEkeywords}
\section{Introduction}
Recently,  the explosive increasing mobile services, especially applications such as video streaming, have imposed a tremendous pressure on the data transmission over the core network. As a result, during the peak-traffic times, the communication systems are usually congested. Caching system, which proactively caches some contents at the network edge during off-peak hours, is a promising solution to alleviate  the  congestions 
 (see \cite{AA,GMDC,GGMG,JTLC,AN}, and references therein).

In order to further reduce the aforementioned  congestions, Maddah-Ali and Niesen proposed a coded caching approach based on network coding theory \cite{AN}. Particularly,
they focused on a $(K,F,M,N)$ caching system: a single server
containing $N$ files with the  same length $F$ packets connects to $K$ users over a shared link and each user has a cache memory of size $MF$ packets (see Fig. \ref{system}).
 \begin{figure}[htbp]
\centering\includegraphics[width=0.4\textwidth]{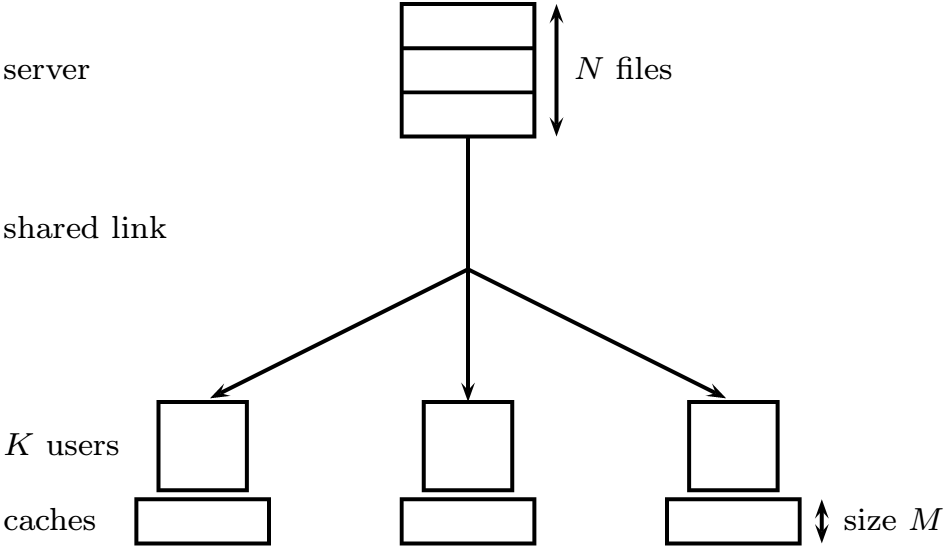}
\caption{Coded caching system}\label{system}
\end{figure}
A coded caching scheme consists of two phases:
\begin{itemize}
\item Placement phase during off-peak times: Parts of content are placed in users' cache memories independent of the user demands which are assumed to be arbitrary.
\item Delivery phase during peak times: Server sends a coded signal of at most $RF$ packets to the users such that each user's demand is satisfied.
\end{itemize}
 The objective is to minimize the  load $RF$ packets, in the delivery phase for the worst-case demands.
In the literature, $R$ is called the delivery rate.

In their seminal work \cite{AN}, Maddah-Ali and Niesen proposed a determined scheme for a $(K,F,M,N)$ coded
caching system, which is referred to as AN scheme in this paper.
Through an elaborate uncoded placement and a coded delivery  which could create multicast opportunity as much as possible,
the $(K,F,M,N)$ AN scheme can reduce the rate $R$ from $K(1-\frac{M}{N})$ of uncoded caching scheme to $K(1-\frac{M}{N})\frac{1}{1+\frac{KM}{N}}$.
Very recently,  it was shown in \cite{WTP} that  AN scheme has the minimum delivery rate under the constraint of uncoded cache placement.
So far, many results have been obtained following \cite{AN}, for instances, \cite{AG,CFL,GR,STC,T,WLG,WTP,YTC}  and so on.
In addition, AN scheme has been extensively employed in practical scenarios,
for examples, decentralized version \cite{MN1}, device to device networks \cite{JCM}, online
caching update \cite{RAN,YUTC} and hierarchical networks \cite{KNMD}, \cite{Xiao2016tree} etc.

In the placement phase of AN scheme, each file has to be  divided into $F={K\choose KM/N}$ packetes.
Clearly,  $F$ increases exponentially with the number of users $K$, which would make AN scheme infeasible when $K$ is large \cite{AN}. Therefore, designing a coded caching scheme with the minimum delivery rate for any integer $F$ becomes a critical issue, especially for practical implementations. In order to characterize the placement phase and delivery phase in a unified way, a new concept called $(K,F,Z,S)$ placement delivery array (PDA) was introduced in \cite{YCTC}, where $\frac{M}{N}=\frac{Z}{F}$ and $R=\frac{S}{F}$. Then, a $(K,F,M,N)$  caching system can be realized by a $(K,F,Z,S)$ PDA. Thus, it is very desirable to study optimal
$(K,F,Z,S)$ PDA with minimum $S$. In \cite{YCTC}, Yan \textit{et al.} showed that $(K,F,M,N)$ AN scheme is equivalent to a $(k,{k \choose t},{k-1 \choose t-1},$ ${k \choose t+1})$ PDA (AN PDA for short throughout this paper) with $k=K$ and $t=KM/N$, and presented two infinite classes of PDAs as well. Later on, two constructions of PDAs   in \cite{SZG,YTCC} were developed from \cite{YCTC}. Unfortunately,  only the AN PDA  was shown to be optimal among all the known constructions up to now.

In this paper, we concentrate on the constructions of optimal PDAs from a combinatorial viewpoint. We first derive lower bounds on the value of $S$.
With respect to the new lower bounds, the transpose of AN PDA is  shown to be optimal.  Then, we propose two kinds of recursive constructions, which lead to
optimal PDAs based on AN PDA and its transpose. Most notably, optimal PDAs with $Z=1$ and $F-1$ for any positive integers $K$ and $F$ are obtained, and some infinite classes of optimal PDAs for $Z=F-3,F-2$ are obtained.

The rest of this paper is organized as follows. Section \ref{preliminaries} introduces some preliminaries about PDAs. In Section \ref{lower-set}, lower bounds are derived. In Section \ref{sec-cons. optimal PDAs} and Section \ref{se-existence optimal PDA}, some constructions   of optimal PDAs are proposed. Finally, conclusion is drawn in Section \ref{c3conclusion}.

\section{Preliminaries}
\label{preliminaries}
Throughout this paper, the following notations are used.
\begin{itemize}
\item The arrays are denote by bold capital letters.
\item $[a,b]=\{a,a+1,\ldots,b\}$ and $[a, b)=\{a,a+1,\ldots,b-1\}$ for intervals of integers for any integers $a$ and $b$.
\item Given an array $\mathbf{P}=(a_{i,j})$ on $[0,S)\cup\{*\}$, set $\mathbf{P}+s=(a_{i,j}+s)$ where $s+*=*$ for any integer $s$.
\end{itemize}

\begin{definition}\rm(\cite{YCTC})
\label{def-PDA}
For  positive integers $K,F$ and nonnegative integers $Z$ and $S$ with $F\ge Z$, an $F\times K$ array $\mathbf{P}=(p_{i,j})$, $i\in [0,F), j\in[0,K)$, composed of a specific symbol $``*"$ called star and $S$ nonnegative integers
$0,1,\cdots, S-1$, is called a $(K,F,Z,S)$ placement delivery array (PDA) if it satisfies the following conditions:
\begin{enumerate}
  \item [C$1$.] The symbol $``*"$ appears $Z$ times in each column;
  \item [C$2$.] For any two distinct entries $p_{i_1,j_1}$ and $p_{i_2,j_2}$,    $p_{i_1,j_1}=p_{i_2,j_2}=s$ is an integer  only if
  \begin{enumerate}
     \item [a.] $i_1\ne i_2$, $j_1\ne j_2$, i.e., they lie in distinct rows and distinct columns; and
     \item [b.] $p_{i_1,j_2}=p_{i_2,j_1}=*$, i.e., the corresponding $2\times 2$  subarray formed by rows $i_1,i_2$ and columns $j_1,j_2$ must be of the following form
  \begin{eqnarray*}
    \left(\begin{array}{cc}
      s & *\\
      * & s
    \end{array}\right)~\textrm{or}~
    \left(\begin{array}{cc}
      * & s\\
      s & *
    \end{array}\right).
  \end{eqnarray*}
   \end{enumerate}
  \end{enumerate}
\end{definition}

\begin{example}\rm
\label{E-pda}
It is easy to verify that, the following array is a $(4,6,3,4)$ PDA:
\begin{eqnarray*}
\mathbf{P}_{6\times 4}=\left(\begin{array}{cccc}
*&*&0&1\\
*&0&*&2\\
*&1&2&*\\
0&*&*&3\\
1&*&3&*\\
2&3&*&*
\end{array}\right).
\end{eqnarray*}
\end{example}

In \cite{YCTC}, Yan \textit{et al.}  showed that a $(K,F,Z,S)$ PDA $\mathbf{P}=(p_{i,j})_{F\times K}$ with $Z/F=M/N$ is
corresponding a $(K,F,M,N)$  caching scheme.
Precisely, each user is able to decode its requested file correctly for any request with delivery rate $R=S/F$. Generally speaking,
it would be preferred to construct a PDA with $S$ as small as possible for given positive integers $K,F,Z$, and thus we define
\begin{eqnarray}\label{Eqn_Def_S}
\mathcal{S}(K,F,Z) = \min_{(K,F,Z,S)~ \mathrm{PDA}} S
\end{eqnarray}
Then, a $(K,F,Z,S)$ PDA  is said
to be optimal if $S=\mathcal{S}(K,F,Z)$.

For each integer $t\in[0,K)$, let $F={K\choose t}$. Arrange all  the subsets with size $t+1$ of $[0,K)$ in the lexicographic order
and define $f_{t+1}(\Omega)$ to be its order minus 1 for any subset $\Omega$ of size $t+1$. Clearly,
$f_{t+1}$ is a bijection from $\{\Omega\subset[0,K):|\Omega|=t+1\}$ to $[0,{K\choose t+1})$.
Then, AN PDA is defined as a ${K\choose t}\times K$ array $\mathbf{P}=(p_{\mathcal{T},j})_{\mathcal{T}\subset [0,K),|\mathcal{T}|=t,j\in [0,K)}$
by
\begin{eqnarray}\label{Eqn_Def_AN}
p_{\mathcal{T},j}=\left\{\begin{array}{cc}
f_{t+1}(\mathcal{T}\cup\{j\}), & \mbox{if}~j\notin\mathcal{T}\\
*, & \mbox{otherwise}
\end{array}
\right.
\end{eqnarray}
where the rows are denoted  by the sets $\mathcal{T}\subset [0,K)$ and $|\mathcal{T}| = t$ \cite{YCTC}.
\begin{example}\label{ex-(4,6,3,4) PDA}
When $K=4$ and $t=2$, all the subsets of size $t+1=3$ in $\{0,1,2,3\}$ are ordered as
$\{0,1,2\},\{0,1,3\}$, $\{0$, $2$, $3\}$ and $\{1$, $2$, $3\}$, i.e.,
\begin{eqnarray*}
f_3(\{0,1,2\})=0,\ f_3(\{0,1,3\})=1,\ f_3(\{0,2,3\})=2 \hbox{ and } f_3(\{1,2,3\})=3.
\end{eqnarray*}
Then by \eqref{Eqn_Def_AN}, we have the $(4,6,3,4)$ PDA $\mathbf{P}_{6\times 4}$ in Example \ref{E-pda}.
\end{example}

It was shown in \cite{WTP} that the AN PDA has the minimum delivery rate under the constraint of uncoded cache placement. That is, $S/F$ is minimum,
which implies $\mathcal{S}(k,{k\choose t},{k-1\choose t-1})={k\choose t+1}$.

\begin{theorem}\label{th-AN-Y}
For any positive integers $k$ and $t$ with $0\le t\le k$, let ${k-1 \choose -1}={k\choose k+1}=0$. Then the  $(k,{k \choose t},{k-1 \choose t-1},{k\choose t+1})$ PDA defined in \eqref{Eqn_Def_AN} is optimal.
\end{theorem}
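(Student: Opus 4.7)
The plan is to derive the lower bound $S \ge \binom{k}{t+1}$ by reducing the PDA optimality question to the rate--memory tradeoff for uncoded placement, and then to observe that the AN PDA attains this bound.

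First, I would invoke the equivalence (noted immediately after Definition~\ref{def-PDA} and made precise in \cite{YCTC}) between $(K,F,Z,S)$ PDAs and $(K,F,M,N)$ coded caching schemes with $M/N = Z/F$ and $R = S/F$: the positions of the stars specify exactly which of the $F$ packets of each file every user stores, so the placement phase induced by any PDA is \emph{uncoded}, while the integer labels merely encode the multicast messages in the delivery phase. Hence every $(k,\binom{k}{t},\binom{k-1}{t-1},S)$ PDA gives rise to a $(k,\binom{k}{t},M,N)$ uncoded caching scheme with $M/N = t/k$ and rate $R = S/\binom{k}{t}$.

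Next, I would appeal to the result of Wan, Tuninetti and Piantanida \cite{WTP} cited in the introduction: under uncoded placement the delivery rate is bounded from below by the AN rate $K(1-M/N)/(1+KM/N)$, and at the corner memory ratios $M/N = t/k$ this evaluates to $(k-t)/(t+1)$. Specializing to $K = k$, $F = \binom{k}{t}$ and $Z = \binom{k-1}{t-1}$, one has $M/N = Z/F = t/k$ and the bound becomes
\begin{equation*}
S \;=\; R\,F \;\ge\; \frac{k-t}{t+1}\binom{k}{t} \;=\; \binom{k}{t+1}.
\end{equation*}
Since \eqref{Eqn_Def_AN} produces a PDA with exactly $S = \binom{k}{t+1}$, the bound is tight and, by \eqref{Eqn_Def_S}, the AN PDA is optimal.

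The only delicate step is the first one: one must be sure that every PDA really does lie inside the class of uncoded schemes to which the bound of \cite{WTP} applies, i.e., that no coded placement can be hidden inside the combinatorial data of a PDA. This is transparent from Definition~\ref{def-PDA}, because the user caches are described entry-by-entry through the star pattern, but a careful writeup should make this correspondence explicit rather than merely cite \cite{YCTC}. The boundary cases $t = 0$ and $t = k$, where the conventions $\binom{k-1}{-1} = \binom{k}{k+1} = 0$ kick in, should then be verified separately, but are immediate.
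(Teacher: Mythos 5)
Your proposal is correct and follows essentially the same route as the paper: the authors justify Theorem~\ref{th-AN-Y} precisely by combining the PDA--caching-scheme correspondence of \cite{YCTC} (star pattern $=$ uncoded placement, $R=S/F$) with the result of \cite{WTP} that the AN rate $(k-t)/(t+1)$ is the minimum delivery rate under uncoded placement, which yields $S\ge\frac{k-t}{t+1}\binom{k}{t}=\binom{k}{t+1}$. Your additional care about making the uncoded-placement correspondence explicit and checking the boundary cases $t=0,k$ is a reasonable refinement of the same argument, not a different proof.
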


\section{Lower bounds on $S(K,F,Z)$}\label{lower-set}

Note that $\mathcal{S}(K,F,0)=KF$ is trivial. In this section, we derive lower bounds on $\mathcal{S}(K,F,Z)$ for positive integers $K$, $F$ and $Z$.

\begin{theorem}\label{th-digui-bound}
Given any positive integers $K,F,Z$ with $F\ge Z$,
\begin{eqnarray}\label{eq-digui-bound}
\mathcal{S}(K,F,Z)\geq \left\lceil\frac{(F-Z)K}{F}\right\rceil+\left\lceil\frac{F-Z-1}{F-1}\left\lceil\frac{(F-Z)K}{F}\right\rceil\right\rceil+\ldots+
\left\lceil\frac{1}{Z+1}\left\lceil\frac{2}{Z+2}\left\lceil\ldots\left\lceil\frac{(F-Z)K}{F}\right\rceil \ldots\right\rceil\right\rceil\right\rceil.
\end{eqnarray}
\end{theorem}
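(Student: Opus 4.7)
The plan is to prove the bound by a cascading pigeonhole argument: I would define $m_0=K$ and, for $k\geq 1$, set $m_k=\left\lceil\frac{F-Z-k+1}{F-k+1}\,m_{k-1}\right\rceil$, and then show $\mathcal{S}(K,F,Z)\geq m_1+m_2+\cdots+m_{F-Z}$, which is exactly the right-hand side of \eqref{eq-digui-bound} after unfolding the recursion. The strategy is, given any $(K,F,Z,S)$ PDA $\mathbf{P}$, to iteratively extract a sequence of rows $r_1,\ldots,r_{F-Z}$ together with a nested family of column sets $[0,K)=C_0\supseteq C_1\supseteq\cdots\supseteq C_{F-Z}$ with $|C_k|=m_k$, such that row $r_k$ has integer entries throughout $C_k$, and the integers so collected over all $F-Z$ rows are pairwise distinct.

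First I would record the structural observation that drives the counting. Suppose rows $r_1,\ldots,r_{k-1}$ all contain integer entries (no $*$) in every column of a set $C_{k-1}$. Since condition C1 forces exactly $Z$ stars per column, within any column $j\in C_{k-1}$ all $Z$ stars lie outside $\{r_1,\ldots,r_{k-1}\}$. Therefore the $(F-k+1)\times m_{k-1}$ subarray obtained by restricting to the remaining rows and to $C_{k-1}$ has exactly $F-k+1-Z$ integer entries per column, for a total of $m_{k-1}(F-k+1-Z)$ integer entries. Pigeonhole then guarantees a row $r_k$ among the remaining rows containing at least $\left\lceil m_{k-1}(F-k+1-Z)/(F-k+1)\right\rceil=m_k$ integer entries inside $C_{k-1}$; let $C_k\subseteq C_{k-1}$ be a set of $m_k$ columns witnessing this. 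This step preserves the invariant for the next round, so the iteration runs for all $k=1,\ldots,F-Z$, at which point $F-k+1-Z=0$ and the process halts naturally.

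Next I would invoke condition C2 to verify distinctness of the $\sum_k m_k$ integer entries harvested from rows $r_1,\ldots,r_{F-Z}$. Entries in a single row are automatically distinct, since two equal non-star entries must lie in distinct rows by C2a. For two different selected rows $r_\ell$ and $r_k$ with $\ell<k$, suppose an integer $s$ appears in $r_k$ at some column $j\in C_k$ and also in $r_\ell$ at some column $j'\in C_\ell$. By C2a, $j\neq j'$, and by C2b the cell $(r_\ell,j)$ must be $*$. But $j\in C_k\subseteq C_\ell$, and by construction $r_\ell$ has an integer entry (not $*$) in every column of $C_\ell$; contradiction. Hence the collected integers are all distinct, forcing $S\geq m_1+\cdots+m_{F-Z}$.

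The main obstacle I expect is keeping the nested invariant transparent and verifying the ceiling recursion unfolds to exactly the stated nested-ceiling expression; both are largely bookkeeping, but one must be careful that the choice of $C_k$ as a subset of $C_{k-1}$ (not just of $[0,K)$) is what makes the C2-based distinctness argument work across all earlier rows simultaneously, and that pigeonhole continues to apply exactly because every column in $C_{k-1}$ inherits the full complement of $Z$ stars in the remaining rows. Once this invariant is pinned down, all other steps (the per-column star count, the pigeonhole, and the expansion of the recursion into the telescoping nested ceilings) are routine.
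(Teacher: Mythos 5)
Your proposal is correct and follows essentially the same route as the paper's proof: the paper also repeatedly extracts a densest row, restricts to the columns where it carries integers (noting via C1 that the restricted subarray is again a PDA with the same $Z$ and one fewer row, and via C2 that the already-counted integers cannot reappear), and recurses $F-Z$ times. Your version merely unrolls that recursion into an explicit iteration with nested column sets $C_0\supseteq C_1\supseteq\cdots$ and proves the cross-row distinctness directly, which is the same argument in slightly more explicit bookkeeping.
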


\begin{proof}
Since $\mathcal{S}(K,F,F)=0$ is clear, we only need to prove \eqref{eq-digui-bound} for $F>Z$.
Suppose that  there is a $(K,F,Z,\mathcal{S}(K,F,Z))$ PDA  $\mathbf{P}$.
Totally, there are $(F-Z)K$ integers in this array. Thus, among the $F$ rows,
there must exist one row containing at lest $\left\lceil \frac{(F-Z)K}{F} \right\rceil$ integers.
Without loss of generality, assume that these $\left\lceil \frac{(F-Z)K}{F} \right\rceil$ integers, say $0$, $1$, $\ldots$, $\left\lceil \frac{(F-Z)K}{F} \right\rceil-1$, are in the first row, i.e.,
\begin{eqnarray*}
\label{eq-digui}
\mathbf{P}=\left(\begin{array}{ccccc|c}
0      & 1     & 2     & \cdots & \left\lceil \frac{(F-Z)K}{F} \right\rceil-1 &\  \cdots\ \ \\ \hline
              &       &  &   &   &\\
       &       &&  \mathbf{P}' &   &\ \ddots\ \  \\
              &       &  &   &   &
                   \end{array}
  \right).
\end{eqnarray*}
If not, we can get such form  of $\mathbf{P}$ by row/column permutations.
Clearly $\mathbf{P}'$ is an $(\left\lceil \frac{(F-Z)K}{F} \right\rceil, F-1, Z, S')$ PDA for one nonnegative integer $S'$. Further, we
know that all the integer in  $[0, \left\lceil \frac{(F-Z)K}{F} \right\rceil)$
do not appear in $\mathbf{P}'$.  Otherwise, it would contradict Property C2. Therefore,  we have
\begin{eqnarray}\label{eq-diyiceng-bound}
\mathcal{S}(K,F,Z)\ge \left\lceil \frac{(F-Z)K}{F} \right\rceil+S'.
\end{eqnarray}

Perform  the same argument to $P'$ till that the remaining array is a $(\left\lceil\frac{1}{Z+1}\left\lceil\frac{2}{Z+2}\left\lceil\ldots\left\lceil\frac{(F-Z)K}{F}\right\rceil \ldots\right\rceil\right\rceil\right\rceil,Z,Z,0)$ PDA. Then, the bound in \eqref{eq-digui-bound} follws by recursively applying
\eqref{eq-diyiceng-bound} $F-Z$ times.
\end{proof}

When $Z=F-1$, \eqref{eq-digui-bound} can be written as
\begin{eqnarray}
\mathcal{S}(K,F,F-1)&\geq & \left\lceil\frac{K}{F}\right\rceil\label{eq-Z=F-1}.
\end{eqnarray}
We will show it is tight for any positive integers $K$ and $F$ in Subsection \ref{sub-Deleting in first recursive construction}. In the remainder of this section, we mainly concentrate on $Z<F-1$.

\begin{corollary}\label{cor-case-1}
For any positive integers $K,F$ and $Z$,
\begin{eqnarray}\label{eq-sepeical case}
\mathcal{S}(K,F,Z)\ge  \left\lceil\frac{(F-Z)K}{F}\right\rceil+F-Z-1.
\end{eqnarray}
\end{corollary}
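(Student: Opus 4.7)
The plan is to derive the corollary as a direct consequence of Theorem \ref{th-digui-bound} by lower-bounding every nested ceiling after the first by $1$. Concretely, I would first invoke \eqref{eq-digui-bound} as a starting point, then count the summands on its right-hand side: besides the leading $\left\lceil (F-Z)K/F\right\rceil$, there are $F-Z-1$ further terms, with multiplicative factors $\tfrac{F-Z-1}{F-1},\tfrac{F-Z-2}{F-2},\ldots,\tfrac{1}{Z+1}$, each sitting inside nested ceilings. The plan is simply to show each of these $F-Z-1$ terms is at least $1$, and add them up.

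The key observation is that if $x$ is a positive real number then $\lceil x\rceil\ge 1$. I would check inductively that the innermost ceiling $\lceil (F-Z)K/F\rceil$ is a positive integer (which holds since $F>Z$ and $K\ge 1$, so the argument is strictly positive), and then for each index $j\in[1,F-Z-1]$, the $j$-th nested ceiling has form $\lceil \tfrac{j}{Z+j}\cdot x_j\rceil$ where $x_j$ is a previously established positive integer. Since $j\ge 1$ and $Z+j\ge 1$, the real argument is strictly positive, hence the ceiling is at least $1$ and, being an integer, can serve as the next $x_{j+1}$ in the induction. This guarantees each of the $F-Z-1$ summands of \eqref{eq-digui-bound} after the first is at least $1$.

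Finally, one only needs to handle the trivial boundary case $F=Z$, where both sides of \eqref{eq-sepeical case} evaluate to $0$ (using the convention that the sum contributes nothing and $\mathcal{S}(K,F,F)=0$), and the case $F>Z$ is covered by the argument above. I do not anticipate any real obstacle: the entire proof reduces to the observation that a ceiling of a positive number is at least $1$, applied $F-Z-1$ times.
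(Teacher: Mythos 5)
Your proposal is correct and follows essentially the same route as the paper: the paper also derives the corollary directly from Theorem \ref{th-digui-bound} by noting that each of the $F-Z-1$ nested-ceiling terms in \eqref{eq-digui-bound} is at least $1$ (the inequality \eqref{Eqn_F_i}), which you merely spell out via a short induction on positivity. The only nitpick is your boundary remark: when $F=Z$ the right-hand side of \eqref{eq-sepeical case} is $-1$ rather than $0$, so the inequality holds trivially there anyway.
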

\begin{proof}
The result directly follows from Theorem \ref{th-digui-bound} and the fact that
\begin{eqnarray}\label{Eqn_F_i}
\left\lceil\frac{F-Z-i}{F-i}\left\lceil\frac{F-Z-i+1}{F-i+1}\left\lceil\ldots\left\lceil\frac{(F-Z)K}{F}\right\rceil \ldots\right\rceil\right\rceil\right\rceil\ge 1
\end{eqnarray}
for any positive integer $1\leq i\leq F-Z-1$.
\end{proof}

Straightforwardly, it is seen from \eqref{Eqn_F_i} that the bound in \eqref{eq-sepeical case} is achievable only if
$\left\lceil\frac{(F-Z-1)}{F-1}\left\lceil\frac{(F-Z)K}{F}\right\rceil\right\rceil=1$.

\begin{example}\label{ex-(6,8,5,5)PDA}
When $K=6$, $F=8$ and $Z=5$, $\left\lceil\frac{(F-Z-1)}{F-1}\left\lceil\frac{(F-Z)K}{F}\right\rceil\right\rceil=1$. So we have $\mathcal{S}(6,8,5)\geq 3+3-1=5$. It is easy to check that the following array is an optimal $(6,8,5,5)$ PDA .
\begin{eqnarray*}
\mathbf{P}_{8\times 6}=\left(\begin{array}{cccccc}
 0    &*     &*    &*   &3   &*\\
 1    &3     &*    &*   &*   &4\\
 *    &0     &1    &*   &*   &*\\
 2    &*     &3    &*   &*   &*\\
 *    &2     &*    &1   &*   &*\\
 *    &*     &4    &0   &2   &*\\
 *    &*     &*    &*   &1   &0\\
 *    &*     &*    &3   &*   &2
\end{array}\right)
\end{eqnarray*}
\end{example}

We can further improve the bound in \eqref{eq-sepeical case} in some cases with the help of the following lower bound.
\begin{lemma}\label{th-cha3-6}
For any positive integers $K,F$ and $Z$ with $F-1>Z$,
\begin{eqnarray}
\label{eq-mod3}
\mathcal{S}\Big(\left\lceil\frac{(F-Z)K}{\mathcal{S}(K,F,Z)}\right\rceil, F-\left\lceil\frac{(F-Z)K}{\mathcal{S}(K,F,Z)}\right\rceil,Z-\left\lceil\frac{(F-Z)K}{\mathcal{S}(K,F,Z)}\right\rceil+1\Big)\le \mathcal{S}(K,F,Z)-1.
\end{eqnarray}
\end{lemma}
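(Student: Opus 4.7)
The plan is to construct a smaller PDA by extracting a sub-array around a heavily-occurring symbol. Write $S=\mathcal{S}(K,F,Z)$, fix an optimal $(K,F,Z,S)$ PDA $\mathbf{P}$, and set $c=\left\lceil(F-Z)K/S\right\rceil$. The target is a sub-PDA with parameters $(c,\,F-c,\,Z-c+1,\,S')$ for some $S'\le S-1$, which then gives \eqref{eq-mod3} immediately. Since $\mathbf{P}$ has exactly $(F-Z)K$ integer entries distributed over $S$ distinct labels, a simple averaging/pigeonhole argument produces an integer $s$ that appears in at least $c$ cells, say at positions $(i_1,j_1),\ldots,(i_c,j_c)$.

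Next I would use Property C2 to read off structure. By C2a the rows $i_1,\ldots,i_c$ and columns $j_1,\ldots,j_c$ are pairwise distinct, and by C2b every off-diagonal cell $p_{i_k,j_\ell}$ with $k\ne \ell$ equals $*$. Focusing on one column $j_k$: it already carries $c-1$ stars inside the row set $\{i_1,\ldots,i_c\}$, and since a full column has only $Z$ stars in total we get the key inequality $c\le Z+1$. This is exactly what makes $Z-c+1\ge 0$ meaningful, and combined with the hypothesis $F-1>Z$ it also yields $F-c\ge F-Z-1\ge 1$, so the reduced parameters describe a non-empty array.

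Now form $\mathbf{P}'$, the sub-array of $\mathbf{P}$ restricted to the columns $\{j_1,\ldots,j_c\}$ and to the $F-c$ rows outside $\{i_1,\ldots,i_c\}$. Property C1 holds: in column $j_k$, the $Z$ original stars split as $c-1$ in the deleted rows (by the previous paragraph) and $Z-c+1$ remaining inside $\mathbf{P}'$. Property C2 is automatic because restricting to a row/column subset cannot manufacture any new coincidence of integer values, so the form of any $2\times 2$ subarray spanned by a pair of equal integers in $\mathbf{P}'$ coincides with its form in $\mathbf{P}$. Crucially, the symbol $s$ itself never appears in $\mathbf{P}'$: all $c$ of its occurrences are in the deleted rows. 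Hence $\mathbf{P}'$ uses at most $S-1$ distinct integers, and after relabeling them as $0,1,\ldots,S'-1$ we obtain a valid $(c,F-c,Z-c+1,S')$ PDA with $S'\le S-1$, which is what \eqref{eq-mod3} asserts.

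The only genuinely delicate point is the structural observation that $c$ copies of $s$ simultaneously force a clean $c\times c$ diagonal-of-$s$, stars-elsewhere pattern \emph{and} cap $c$ by $Z+1$; without this cap the claimed parameter $Z-c+1$ could be negative. Everything else — the pigeonhole count, the column-star bookkeeping, the inheritance of C2, the relabeling step — is routine once the $c\le Z+1$ inequality is in hand.
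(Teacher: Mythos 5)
Your proof is correct and takes essentially the same route as the paper: pigeonhole an integer $s$ occurring at least $\left\lceil\frac{(F-Z)K}{\mathcal{S}(K,F,Z)}\right\rceil$ times, use C2 to force the diagonal-of-$s$, stars-elsewhere pattern, and pass to the sub-array on those columns and the remaining rows, which inherits C1 with $Z-c+1$ stars per column, inherits C2, and misses $s$, hence uses at most $\mathcal{S}(K,F,Z)-1$ integers. Your explicit cap $c\le Z+1$ is just the column star-count that the paper leaves implicit in asserting the parameters of $\mathbf{P}'$.
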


\begin{proof} Assume that there exists a $(K,F,Z,\mathcal{S}(K,F,Z))$ PDA  $\mathbf{P}$. In this array,
there are $\mathcal{S}(K,F,Z)$ distinct ones among all the  $(F-Z)K$ integers. Then, there must exist an
integer $s\in [0,\mathcal{S}(K,F,Z))$ occurring at least $\left\lceil \frac{(F-Z)K}{\mathcal{S}(K,F,Z)}\right\rceil$ times in $\mathbf{P}$. By means of row/column permutations,
we are able to write $\mathbf{P}$ as follows
\begin{eqnarray}
\label{eq-pda-mod3}
\mathbf{P}=\left(\begin{array}{ccccc|c}
s      & *     & *     & \ldots & * &\  \ldots\ \ \\
*      & s     & *     & \ldots & * &\ \ldots\ \ \\
\vdots &       &\vdots &        &   &\ \ddots\ \ \\
*      & *     & *     & \ldots & s &\ \ldots\ \ \\
\hline\\[-0.5cm]
       &       &  &   &   &\\
       &       & \mathbf{P}'&   &   &\ \ddots\ \  \\
              &       &  &   &   &
                   \end{array}
  \right)
\end{eqnarray}
Obviously, $\mathbf{P}'$ is a $(\left\lceil\frac{(F-Z)K}{\mathcal{S}(K,F,Z)}\right\rceil, F-\left\lceil\frac{(F-Z)K}{\mathcal{S}(K,F,Z)}\right\rceil,Z-\left\lceil\frac{(F-Z)K}{\mathcal{S}(K,F,Z)}\right\rceil+1,S')$ PDA with $S'\le \mathcal{S}(K,F,Z)-1$,
which implies
\begin{eqnarray*}
S\Big(\left\lceil\frac{(F-Z)K}{\mathcal{S}(K,F,Z)}\right\rceil, F-\left\lceil\frac{(F-Z)K}{\mathcal{S}(K,F,Z)}\right\rceil,Z-\left\lceil\frac{(F-Z)K}{\mathcal{S}(K,F,Z)}\right\rceil+1\Big)\leq S'\le
\mathcal{S}(K,F,Z)-1.
\end{eqnarray*}
\end{proof}

\begin{theorem}\label{th-furhter improved}
If positive integers $K,F,Z$ satisfying $\left\lceil\frac{(F-Z-1)}{F-1}\left\lceil\frac{(F-Z)K}{F}\right\rceil\right\rceil=1$ and
$\left\lceil\frac{(F-Z)K}{F}\right\rceil F<\left\lceil\frac{(F-Z)K}{\left\lceil\frac{(F-Z)K}{F}\right\rceil+F-Z-1}\right\rceil $ $(\left\lceil\frac{(F-Z)K}{F}\right\rceil+F-Z-1)$, then
\begin{eqnarray}\label{Eqn_bound_1}
\mathcal{S}(K,F,Z)\geq \left\lceil\frac{(F-Z)K}{F}\right\rceil+F-Z
\end{eqnarray}
\end{theorem}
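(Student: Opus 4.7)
Write $x:=\lceil (F-Z)K/F\rceil$ and $y:=x+F-Z-1$, so that Corollary 1 already yields $\mathcal{S}(K,F,Z)\ge y$. The plan is therefore to argue by contradiction: assume a $(K,F,Z,y)$ PDA $\mathbf{P}$ exists, and use Lemma 1 together with a \emph{second} application of Corollary 1 to derive an impossibility from the second hypothesis of the theorem.

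First, apply Lemma 1 to $\mathbf{P}$ with $\mathcal{S}(K,F,Z)=y$. Setting $a:=\lceil (F-Z)K/y\rceil$, this produces a sub-PDA $\mathbf{P}'$ with parameters $(a,\,F-a,\,Z-a+1,\,S')$ satisfying $S'\le y-1$. The crucial observation is that the new ``$F-Z$'' parameter collapses to $(F-a)-(Z-a+1)=F-Z-1$, so Corollary 1 applied to $\mathbf{P}'$ gives
\begin{equation*}
S'\;\ge\;\mathcal{S}(a,\,F-a,\,Z-a+1)\;\ge\;\left\lceil\frac{(F-Z-1)\,a}{F-a}\right\rceil+(F-Z-2).
\end{equation*}
Combining this with $S'\le y-1=x+F-Z-2$ yields $x\ge \lceil (F-Z-1)a/(F-a)\rceil$.

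Next, I would unpack the second hypothesis $xF<ay$: substituting $y=x+F-Z-1$ rearranges it to $x(F-a)<a(F-Z-1)$, i.e., $a(F-Z-1)/(F-a)>x$. Since $x$ is an integer, this forces $\lceil a(F-Z-1)/(F-a)\rceil\ge x+1$, which directly contradicts the lower bound on $x$ obtained in the previous paragraph. No $(K,F,Z,y)$ PDA can therefore exist, and $\mathcal{S}(K,F,Z)\ge x+F-Z$ follows.

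The main obstacle I foresee is bookkeeping around the validity of the sub-PDA parameters so that Lemma 1 and Corollary 1 actually apply: one needs $1\le a\le Z+1$ (so that $Z-a+1\ge 0$) and $F-a\ge 1$. The first follows from the structural fact that Property C2 forces any integer in a PDA to appear at most $Z+1$ times, and the second can be verified directly from $Fy>Fx\ge (F-Z)K$. The borderline cases $a=Z+1$ and $a=F-1$ reduce to easy checks in which the contradiction still closes, so the argument above should go through cleanly.
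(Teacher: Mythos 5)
Your proposal is correct and follows essentially the same route as the paper: assume equality in Corollary \ref{cor-case-1}, apply Lemma \ref{th-cha3-6} to extract the $(a,\,F-a,\,Z-a+1,\,S')$ sub-PDA, bound $S'$ again via Corollary \ref{cor-case-1}, and contradict the hypothesis $xF<ay$; your direct rearrangement $x(F-a)<a(F-Z-1)$ is in fact a cleaner closing step than the paper's $\gamma$--$\delta$ substitution. One small remark: $F-a\ge 1$ is better justified by $a\le Z+1\le F-1$ (the first hypothesis forces $Z\le F-2$) than by $Fy>(F-Z)K$, which only yields $a\le F$.
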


\begin{proof} We prove this statement by contradiction. If \eqref{Eqn_bound_1}
does not hold, then by \eqref{eq-sepeical case}
\begin{eqnarray}\label{Eqn_S1}
\mathcal{S}(K,F,Z)=\left\lceil\frac{(F-Z)K}{F}\right\rceil+F-Z-1
\end{eqnarray}

By \eqref{eq-mod3} and \eqref{Eqn_S1},
\begin{eqnarray*}
&&\left(\left\lceil\frac{(F-Z)K}{F}\right\rceil+(F-Z)-1\right)-1\\
&=&\mathcal{S}(K,F,Z)-1\\
&\geq& \mathcal{S}\Big(\left\lceil\frac{(F-Z)K}{\mathcal{S}(K,F,Z)}\right\rceil, F-\left\lceil\frac{(F-Z)K}{\mathcal{S}(K,F,Z)}\right\rceil,Z-\left\lceil\frac{(F-Z)K}{\mathcal{S}(K,F,Z)}\right\rceil+1\Big)\\
&\geq&\left\lceil\frac{(F-\left\lceil\frac{(F-Z)K}{\mathcal{S}(K,F,Z)}\right\rceil-(Z-\left\lceil\frac{(F-Z)K}{\mathcal{S}(K,F,Z)}\right\rceil+1)) \left\lceil\frac{(F-Z)K}{\mathcal{S}(K,F,Z)}\right\rceil}{F-\left\lceil\frac{(F-Z)K}{\mathcal{S}(K,F,Z)}\right\rceil}\right\rceil+\left(F-
\left\lceil\frac{(F-Z)K}{\mathcal{S}(K,F,Z)}\right\rceil-\left(Z-\left\lceil\frac{(F-Z)K}{\mathcal{S}(K,F,Z)}\right\rceil+1\right)\right)-1\\
&=&\left\lceil\frac{(F-Z-1) \left\lceil\frac{(F-Z)K}{\mathcal{S}(K,F,Z)}\right\rceil}{F-\left\lceil\frac{(F-Z)K}{\mathcal{S}(K,F,Z)}\right\rceil}\right\rceil+F-Z-2
\end{eqnarray*}
Note that in the second inequality, we apply the lower bound \eqref{eq-sepeical case}
to the $(\left\lceil\frac{(F-Z)K}{\mathcal{S}(K,F,Z)}\right\rceil, F-\left\lceil\frac{(F-Z)K}{\mathcal{S}(K,F,Z)}\right\rceil,Z-\left\lceil\frac{(F-Z)K}{\mathcal{S}(K,F,Z)}\right\rceil+1,S')$ PDA
$\mathbf{P}'$ in \eqref{eq-pda-mod3}.

Therefore, we have
\begin{eqnarray*}
 \left\lceil\frac{(F-Z)K}{F}\right\rceil\geq \frac{(F-Z-1) \left\lceil\frac{(F-Z)K}{\mathcal{S}(K,F,Z)}\right\rceil}{F-\left\lceil\frac{(F-Z)K}{\mathcal{S}(K,F,Z)}\right\rceil}.
\end{eqnarray*}
Let $\left\lceil\frac{(F-Z)K}{F}\right\rceil F=(F-Z)K+ \gamma$ and $\left\lceil\frac{(F-Z)K}{\mathcal{S}(K,F,Z)}\right\rceil \mathcal{S}(K,F,Z)=(F-Z)K+\delta$.
Then,
\begin{eqnarray*}
\frac{(F-Z)K+\gamma}{F}\geq\frac{(F-Z-1) \frac{(F-Z)K+\delta}{\mathcal{S}(K,F,Z)}}{F-\frac{(F-Z)K+\delta}{\mathcal{S}(K,F,Z)}}.
\end{eqnarray*}
which gives
\begin{eqnarray*}
\frac{F-\frac{(F-Z)K+\delta}{\mathcal{S}(K,F,Z)}}{F}\geq \frac{(F-Z-1) \frac{(F-Z)K+\delta}{\mathcal{S}(K,F,Z)}}{(F-Z)K+\gamma},
\end{eqnarray*}
i.e.,
\begin{eqnarray*}
\mathcal{S}(K,F,Z)-\frac{(F-Z)K+\delta}{F}\geq (F-Z-1)+\frac{(F-Z-1)(\delta-\gamma)}{(F-Z)K+\gamma}.
\end{eqnarray*}

Substituting \eqref{Eqn_S1} into above equation, we can obtain
\begin{eqnarray*}
(\delta-\gamma)\left(\frac{F-Z-1}{(F-Z)K+\gamma}+\frac{1}{F}\right)\leq 0.
\end{eqnarray*}
which implies that $\delta-\gamma\leq 0$, i.e., $\left\lceil\frac{(F-Z)K}{\mathcal{S}(K,F,Z)}\right\rceil \mathcal{S}(K,F,Z)-(F-Z)K \leq \left\lceil\frac{(F-Z)K}{F}\right\rceil F -(F-Z)K $.
Again applying \eqref{Eqn_S1}, we then have $\left\lceil\frac{(F-Z)K}{\left\lceil\frac{(F-Z)K}{F}\right\rceil+F-Z-1}\right\rceil(\left\lceil\frac{(F-Z)K}{F}\right\rceil+F-Z-1)
\le \left\lceil\frac{(F-Z)K}{F}\right\rceil F$.
This  contradicts  the hypothesis.
\end{proof}

The following improved bounds can be obtained immediately from Theorem \ref{th-furhter improved}.

\begin{corollary}\label{co-tighter S(K,F,F-2)}
If  $K\leq\frac{F(F-1)}{2}$ and $\left\lceil\frac{2K}{F} \right\rceil F<\left\lceil \frac{2K}{\left\lceil\frac{2K}{F} \right\rceil+1}\right\rceil(\left\lceil\frac{2K}{F} \right\rceil+1)$, then
\begin{eqnarray}
\mathcal{S}(K,F,F-2)\geq \left\lceil\frac{2K}{F} \right\rceil+2
\end{eqnarray}
\end{corollary}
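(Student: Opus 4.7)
The plan is to obtain this corollary as a direct specialization of Theorem \ref{th-furhter improved} by setting $Z=F-2$, so that $F-Z=2$ and $F-Z-1=1$. All that is needed is to verify that, under this substitution, the two technical hypotheses of Theorem \ref{th-furhter improved} reduce exactly to the two conditions stated in the corollary, after which the conclusion $\mathcal{S}(K,F,Z)\geq \lceil(F-Z)K/F\rceil+F-Z$ immediately becomes $\mathcal{S}(K,F,F-2)\geq \lceil 2K/F\rceil+2$.

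First I would handle the first hypothesis of Theorem \ref{th-furhter improved}, namely $\left\lceil\frac{F-Z-1}{F-1}\left\lceil\frac{(F-Z)K}{F}\right\rceil\right\rceil=1$. With $Z=F-2$ this becomes $\left\lceil\frac{1}{F-1}\left\lceil\frac{2K}{F}\right\rceil\right\rceil=1$. The ceiling of a nonnegative rational equals $1$ precisely when that rational lies in $(0,1]$, so the condition is equivalent to $1\le \left\lceil\frac{2K}{F}\right\rceil\le F-1$. The lower bound is automatic since $K\ge 1$ implies $\lceil 2K/F\rceil\ge 1$. The upper bound $\lceil 2K/F\rceil\le F-1$ is equivalent to $2K/F\le F-1$, i.e.\ $K\le F(F-1)/2$, which is exactly the first hypothesis of the corollary.

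Next I would check the second hypothesis. Substituting $Z=F-2$ into $\left\lceil\frac{(F-Z)K}{F}\right\rceil F<\left\lceil\frac{(F-Z)K}{\lceil(F-Z)K/F\rceil+F-Z-1}\right\rceil(\lceil(F-Z)K/F\rceil+F-Z-1)$ gives
\begin{equation*}
\left\lceil\frac{2K}{F}\right\rceil F<\left\lceil\frac{2K}{\lceil 2K/F\rceil+1}\right\rceil\Bigl(\lceil 2K/F\rceil+1\Bigr),
\end{equation*}
which is precisely the second hypothesis of the corollary. Hence both hypotheses of Theorem \ref{th-furhter improved} hold with $Z=F-2$, and the theorem yields $\mathcal{S}(K,F,F-2)\ge \lceil 2K/F\rceil+2$.

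There is essentially no obstacle here; the only point requiring a moment's care is the equivalence $\left\lceil\frac{1}{F-1}\lceil 2K/F\rceil\right\rceil=1\iff K\le F(F-1)/2$, which relies on $K\ge 1$ to rule out the trivial vanishing case. Once this is observed, the corollary is a transparent instance of Theorem \ref{th-furhter improved} with no further computation required.
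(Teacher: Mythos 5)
Your proposal is correct and matches the paper's own treatment: the paper obtains this corollary "immediately" from Theorem \ref{th-furhter improved} by setting $Z=F-2$, which is exactly your specialization, and your verification that $\left\lceil\frac{1}{F-1}\left\lceil\frac{2K}{F}\right\rceil\right\rceil=1$ is equivalent to $K\leq\frac{F(F-1)}{2}$ (using that $\lceil x\rceil\le n$ iff $x\le n$ for integer $n$) correctly fills in the only detail the paper leaves implicit.
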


\begin{corollary}\label{co-K=F lower bound}
If $(2F-2Z-1)\nmid(F-Z)F$, then $\mathcal{S}(F,F,Z)\geq 2(F-Z)$.
\end{corollary}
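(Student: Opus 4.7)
The plan is to specialize Theorem \ref{th-furhter improved} to $K=F$. With $K=F$ the quantity $\lceil (F-Z)K/F\rceil$ collapses to $F-Z$, so Theorem \ref{th-furhter improved}'s conclusion specializes to $\mathcal{S}(F,F,Z)\ge (F-Z)+(F-Z)=2(F-Z)$, exactly the desired inequality. The task therefore reduces to checking the two hypotheses of Theorem \ref{th-furhter improved} in this special case, and, when one of them fails, falling back on Theorem \ref{th-digui-bound} to recover the same bound.

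First I would rewrite condition 2 of Theorem \ref{th-furhter improved} under $K=F$; it becomes
\[
(F-Z)F \;<\; \Bigl\lceil \tfrac{(F-Z)F}{2(F-Z)-1}\Bigr\rceil\bigl(2(F-Z)-1\bigr).
\]
A short division argument — writing $(F-Z)F=q(2F-2Z-1)+r$ with $0\le r<2F-2Z-1$ — shows that the right-hand side strictly exceeds $(F-Z)F$ iff $r\neq 0$. Hence condition 2 is equivalent to $(2F-2Z-1)\nmid (F-Z)F$, which is exactly the hypothesis of the corollary.

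Next I would check condition 1, which at $K=F$ reads $\lceil (F-Z-1)(F-Z)/(F-1)\rceil = 1$. This is not automatic from the divisibility hypothesis (for instance it fails at $F=10$, $Z=6$), so I would split into two cases. If condition 1 holds, Theorem \ref{th-furhter improved} applies directly and yields the bound. If condition 1 fails, then $\lceil (F-Z-1)(F-Z)/(F-1)\rceil \ge 2$, and I would invoke Theorem \ref{th-digui-bound} instead: denoting the nested ceilings of \eqref{eq-digui-bound} by $n_0, n_1, \ldots, n_{F-Z-1}$, we have $n_0=F-Z$, $n_1\ge 2$, and $n_i\ge 1$ for every $i\ge 2$, so the sum is at least $(F-Z)+2+(F-Z-2) = 2(F-Z)$.

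The main step is recognizing that the corollary's divisibility hypothesis is just a rephrasing of condition 2 of Theorem \ref{th-furhter improved}; once that identification is in hand, the case split handling the possible failure of condition 1 is routine, with Theorem \ref{th-digui-bound} picking up the slack in the case where the first nested ceiling is already at least $2$.
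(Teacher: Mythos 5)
Your proposal is correct and follows essentially the same route as the paper: it specializes Theorem \ref{th-furhter improved} at $K=F$ (where condition 2 reduces exactly to $(2F-2Z-1)\nmid(F-Z)F$), and when $\left\lceil\frac{(F-Z-1)(F-Z)}{F-1}\right\rceil\geq 2$ it falls back on Theorem \ref{th-digui-bound} to get $(F-Z)+2+(F-Z-2)=2(F-Z)$, which is precisely the paper's case split. The only difference is that you spell out the division argument showing the equivalence of condition 2 with the divisibility hypothesis, a detail the paper leaves implicit.
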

\begin{proof}
If $\left\lceil\frac{(F-Z-1)(F-Z)}{F-1}\right\rceil=1$ and $(2F-2Z-1)\nmid(F-Z)F$, the result directly follows from Theorem \ref{th-furhter improved}.
If $\left\lceil\frac{(F-Z-1)(F-Z)}{F-1}\right\rceil\geq 2$, similarly to the proof of Corollary \ref{cor-case-1}, it follows from Theorem \ref{th-digui-bound}.
\begin{eqnarray*}
\mathcal{S}(F,F,Z)\geq (F-Z)+\left\lceil\frac{(F-Z-1)(F-Z)}{F-1}\right\rceil+(F-Z-2)=2(F-Z).
\end{eqnarray*}

\end{proof}

\section{Constructions of optimal PDAs}
\label{sec-cons. optimal PDAs}
Given any positive integer $k$ and nonnegative integer $t$ with $t\leq k$, in this section we begin from the optimal $(k,{k \choose t},{k-1 \choose t-1},{k \choose t+1})$ PDA
$\mathbf{P}$ in \eqref{Eqn_Def_AN} to present some optimal PDAs.
First of all, consider its transpose $\mathbf{P}^\top=(p'_{j,\mathcal{T}})_{j\in [0,k),\mathcal{T}\subset [0,k),|\mathcal{T}|=t}$, i.e.,
\begin{eqnarray}\label{Eqn_Def_AN_T}
p'_{j,\mathcal{T}}=p_{\mathcal{T},j}=\left\{\begin{array}{cc}
f_{t+1}(\mathcal{T}\cup\{j\}), & \mbox{if}~j\notin\mathcal{T}\\
*, & \mbox{otherwise}
\end{array}
\right.
\end{eqnarray}
where the column is denoted by the sets in $\mathcal{T}\subset [0,k)$ and $|\mathcal{T}| = t$.
\begin{example}\label{ex-(6,4,2,4) PDA}
When $k=4$ and $t=2$, a $(6,4,2,4)$ PDA is obtained from \eqref{Eqn_Def_AN_T} as
\begin{eqnarray*}
\mathbf{P}^\top_{6\times 4}=\left(\begin{array}{cccccc}
*&*&*& 0& 1& 2\\
*& 0& 1&*&*& 3\\
0&*& 2&*& 3&*\\
1& 2&*& 3&*&*
\end{array}\right),
\end{eqnarray*}
which is the transpose of the PDA in Example \ref{ex-(4,6,3,4) PDA}.
\end{example}

It is easy to check that $\mathbf{P}^\top$ is a $({k \choose t},k,t,{k\choose t+1})$ PDA. We show its optimality, i.e.,
$\mathcal{S}({k \choose t},k,t)={k\choose t+1}$, in the following theorem.

\begin{theorem}\label{th-Ali-optimal}
For any positive integer $k$ and nonnegative integer $t$ with $t\le k$, $\mathbf{P}^\top$ in \eqref{Eqn_Def_AN_T} is an optimal $({k \choose t},k,t,{k\choose t+1})$ PDA.
\end{theorem}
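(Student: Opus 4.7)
The plan is to prove optimality by showing that the general recursive lower bound of Theorem \ref{th-digui-bound}, applied to the parameters $K={k\choose t}$, $F=k$, $Z=t$, matches the integer count $S={k\choose t+1}$ that the construction $\mathbf{P}^\top$ already achieves (and which was verified at the end of Example \ref{ex-(6,4,2,4) PDA} / the preceding paragraph). So the whole task reduces to a clean computation of the nested-ceiling expression in \eqref{eq-digui-bound}.

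The key observation, which I would establish first, is that with these parameters the nested ceilings collapse to binomial coefficients with no rounding. Specifically, $(F-Z)K/F = (k-t){k\choose t}/k = {k-1\choose t}$, an integer. More generally, I would show by induction on $i$ that the $i$-fold nested expression inside the ceilings equals ${k-i\choose t}$: indeed, assuming the inner value is ${k-i\choose t}$, the next layer gives
\begin{equation*}
\frac{F-Z-i}{F-i}\,{k-i\choose t}=\frac{k-t-i}{k-i}\cdot\frac{(k-i)!}{t!\,(k-i-t)!}={k-i-1\choose t},
\end{equation*}
again an integer, so no ceiling is ever active.

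Once this is proved, Theorem \ref{th-digui-bound} gives
\begin{equation*}
\mathcal{S}\!\left({k\choose t},k,t\right)\ \ge\ \sum_{i=1}^{k-t}{k-i\choose t}\ =\ \sum_{j=t}^{k-1}{j\choose t}\ =\ {k\choose t+1},
\end{equation*}
where the last equality is the hockey-stick identity. Since $\mathbf{P}^\top$ is a $({k\choose t},k,t,{k\choose t+1})$ PDA (which follows immediately from the fact that $\mathbf{P}$ in \eqref{Eqn_Def_AN} is a PDA, and the defining conditions C1, C2 are symmetric under transposition), equality holds and $\mathbf{P}^\top$ is optimal.

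The main (mild) obstacle is purely bookkeeping: correctly counting that the telescoping recursion has exactly $k-t = F-Z$ layers, so the summation index on the ${k-i\choose t}$'s runs from $i=1$ to $i=k-t$, after which the hockey-stick identity produces ${k\choose t+1}$ rather than an off-by-one variant. There is no combinatorial cleverness beyond the binomial identity; the construction side is already handled by the transpose symmetry of Definition \ref{def-PDA}.
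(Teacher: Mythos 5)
Your proposal is correct and follows essentially the same route as the paper: apply the recursive bound of Theorem \ref{th-digui-bound} with $K={k\choose t}$, $F=k$, $Z=t$, observe that each nested ceiling evaluates exactly to a binomial coefficient ${k-i\choose t}$, and sum via the hockey-stick identity to obtain ${k\choose t+1}$, matching the value achieved by $\mathbf{P}^\top$. The only cosmetic difference is that you make the ``ceilings are never active'' step an explicit induction, while the paper states the resulting binomial sum directly and separately notes the trivial cases $t=k$ and $t=k-1$.
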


\begin{proof}
According to \eqref{eq-digui-bound},
\begin{eqnarray}\label{Eqn_sm}
&&\mathcal{S}\left({k \choose t},k,t\right)\nonumber\\
&\geq&  \left\lceil\frac{(k-t){k \choose t}}{k}\right\rceil+\left\lceil\frac{(k-t-1)}{k-1}\left\lceil\frac{(k-t){k \choose t}}{k}\right\rceil\right\rceil+\ldots+
\left\lceil\frac{1}{t+1}\left\lceil\frac{2}{t+2}\left\lceil\ldots\left\lceil\frac{(k-t){k \choose t}}{k}\right\rceil \ldots\right\rceil\right\rceil\right\rceil\nonumber\\
&=&{k-1\choose k-t-1}+{k-2\choose k-t-2}+\ldots+{k-(k-t-1)\choose k-t-(k-t-1)}+{k-(k-t)\choose k-t-(k-t)}\nonumber\\
&=&{k-1\choose t}+{k-2\choose t}+\ldots+{t+1\choose t}+{t\choose t}\\
&=&{k\choose t+1}\nonumber
\end{eqnarray}
where last identity holds due to ${n\choose t+1}={n-1 \choose t}+{n-1\choose t+1}$
for any positive integers $n\ge t+2$ and the fact ${t+1\choose t+1}={t\choose t}$.
It then follows from \eqref{Eqn_Def_S} that $\mathcal{S}({k \choose t},k,t)={k\choose t+1}$ for $k\ge t+2$.

Additionally, it is easy to see that $\mathcal{S}(1,k,k)={k\choose k+1}=0$ and $\mathcal{S}(k,k,k-1)={k\choose k}=1$. That is,
the assertion is also true for  $t=k$ or $k-1$.

\end{proof}

In what follows, we propose two kinds of recursive constructions based on the optimal PDAs in Theorems \ref{th-AN-Y} and
\ref{th-Ali-optimal}. By means of these recursive constructions, several infinite classes of optimal PDAs can be obtained.

\subsection{The first recursive construction}\label{sub-first recursive}

\begin{construction}\label{con-first recursive}
Given positive  integers $K_i,F$ and $Z$, let $\mathbf{P}_i$ be a $(K_i,F,Z,S_i)$ PDA with integers $[0,S_i)$ for $0\le i <m$ where $m\geq 2$. Generate an $F\times mK$ array
\begin{eqnarray}\label{eq-first-recursive}
\mathbf{P}=\Big(\mathbf{P}_0,\mathbf{P}_1+S_0,\cdots, \mathbf{P}_{m-1}+\sum_{i=0}^{m-2}S_i\Big).
\end{eqnarray}
It is not difficult to check from the definition that $\mathbf{P}$ in \eqref{eq-first-recursive} is an $(\sum_{i=0}^{m-1}K,F,Z,\sum_{i=0}^{m-1}S_i)$ PDA.
\end{construction}

\begin{theorem}\label{th-optimal-frist Ali}
For any positive integers $k$, $t$ and $m$ with $0\leq t\leq k$, there exists an optimal $(m{k\choose t},k,t,m{k\choose t+1})$ PDA.
\end{theorem}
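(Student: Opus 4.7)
The plan is to combine Construction \ref{con-first recursive} with the recursive lower bound of Theorem \ref{th-digui-bound}. For the upper bound, I would start from the optimal $({k\choose t},k,t,{k\choose t+1})$ PDA guaranteed by Theorem \ref{th-Ali-optimal}, take $m$ copies of it, and feed them into Construction \ref{con-first recursive} with $K_0=K_1=\cdots=K_{m-1}={k\choose t}$, $F=k$, $Z=t$, and $S_i={k\choose t+1}$. The output is immediately an $(m{k\choose t},k,t,m{k\choose t+1})$ PDA, so existence is free.

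The substantive part is proving the matching lower bound $\mathcal{S}(m{k\choose t},k,t)\ge m{k\choose t+1}$. My plan is to instantiate Theorem \ref{th-digui-bound} at $K=m{k\choose t}$, $F=k$, $Z=t$. The crucial observation is that the binomial arithmetic makes every ceiling in \eqref{eq-digui-bound} vacuous: indeed,
\begin{equation*}
\frac{(k-t)\, m{k\choose t}}{k} \;=\; m{k-1\choose t},
\end{equation*}
and, more generally, a short induction shows that after peeling off $i$ ceilings the expression equals the integer $m{k-i-1\choose t}$, because
\begin{equation*}
\frac{k-t-i}{k-i}\cdot m{k-i\choose t} \;=\; m{k-i-1\choose t}.
\end{equation*}

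Once the ceilings are removed, the lower bound in Theorem \ref{th-digui-bound} becomes
\begin{equation*}
\mathcal{S}\!\left(m{k\choose t},k,t\right) \;\ge\; m\sum_{i=0}^{k-t-1} {k-1-i\choose t} \;=\; m\sum_{j=t}^{k-1} {j\choose t} \;=\; m{k\choose t+1},
\end{equation*}
by the hockey-stick identity, matching the construction. This mirrors the telescoping calculation already carried out inside the proof of Theorem \ref{th-Ali-optimal} (see \eqref{Eqn_sm}); the only new wrinkle is the factor $m$, which passes cleanly through every step. I would close by noting the degenerate cases $t=0$ (the trivial $k\times m$ all-integer array) and $t=k$ (the trivial all-star array) separately so the statement holds for the full range $0\le t\le k$.

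The main potential obstacle is really just bookkeeping — making sure that the ceiling operations in \eqref{eq-digui-bound} are genuinely superfluous at every level of nesting, i.e.\ that the integrality ${(k-t-i)}/{(k-i)} \cdot m{k-i\choose t}\in\mathbb{Z}$ propagates through all $k-t$ stages. Since this reduces to the identity $(k-t-i){k-i\choose t}=(k-i){k-i-1\choose t}$, which holds in $\mathbb{Z}$ regardless of $m$, there is no real difficulty; the proof is essentially Theorem \ref{th-Ali-optimal} with an extra multiplicative constant carried along.
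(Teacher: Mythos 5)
Your proposal is correct and follows essentially the same route as the paper: build the PDA by feeding $m$ copies of the optimal $({k\choose t},k,t,{k\choose t+1})$ PDA of Theorem \ref{th-Ali-optimal} into Construction \ref{con-first recursive}, then match it with the recursive bound \eqref{eq-digui-bound} via the telescoping binomial identity as in \eqref{Eqn_sm}. Your explicit check that every nested ceiling is vacuous (since $(k-t-i){k-i\choose t}=(k-i){k-i-1\choose t}$ makes each level an exact integer $m{k-i-1\choose t}$) is in fact the precise justification of the paper's step pulling the factor $m$ out of the nested ceilings, so if anything your write-up is slightly more careful on that point.
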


\begin{proof}
Setting $\mathbf{P}_i$ be the same optimal $({k\choose t},k,t,{k\choose t+1})$ PDA given in Theorem \ref{th-Ali-optimal}
for all $0\le i<m$,  we get an $(m{k\choose t},k,t,m{k\choose t+1})$ PDA
by Construction \ref{con-first recursive}. Hereafter, it suffices to show $\mathcal{S}(m{k\choose t},k,t)\geq m{k\choose t+1}$.

Applying \eqref{eq-digui-bound} and \eqref{Eqn_sm}, we have
\begin{eqnarray*}
&&\mathcal{S}\left(m{k \choose t},k,t\right)\\
&\geq&  \left\lceil\frac{(k-t)m{k \choose t}}{k}\right\rceil+\left\lceil\frac{(k-t-1)}{k-1}\left\lceil\frac{(k-t)m{k \choose t}}{k}\right\rceil\right\rceil+\cdots+
\left\lceil\frac{1}{t+1}\left\lceil\frac{2}{t+2}\left\lceil\ldots\left\lceil\frac{(k-t)m{k \choose t}}{k}\right\rceil \ldots\right\rceil\right\rceil\right\rceil\\
&\geq&m\left\{ \left\lceil\frac{(k-t){k \choose t}}{k}\right\rceil+\left\lceil\frac{(k-t-1)}{k-1}\left\lceil\frac{(k-t){k \choose t}}{k}\right\rceil\right\rceil
+\cdots+
\left\lceil\frac{1}{t+1}\left\lceil\frac{2}{t+2}\left\lceil\ldots\left\lceil\frac{(k-t){k \choose t}}{k}\right\rceil \ldots\right\rceil\right\rceil\right\rceil\right\}\\
&=&m{k\choose t+1}.
\end{eqnarray*}
\end{proof}
\begin{example}
\label{ex-(12,4,2,8) PDA}
Let $m=2$, $k=4$ and $t=2$. From Example \ref{ex-(6,4,2,4) PDA} and Theorem \ref{th-optimal-frist Ali}, an optimal $(12,4,2,8)$ PDA can be obtained as follows.
\begin{eqnarray*}
\mathbf{P}_{4\times 12}=\left(\begin{array}{cccccc|cccccc}
*&* &* &0 &1 &2  &* &* &* &4 &5 &6\\
*&0 &1 &* &* &3  &* &4 &5 &* &* &7\\
0&* &2 &* &3 &*  &4 &* &6 &* &7 &*\\
1&2 &* &3 &* &*  &5 &6 &* &7 &* &*
\end{array}\right).
\end{eqnarray*}
\end{example}

\subsection{The second recursive construction}\label{sub-second recursive}

\begin{construction}\label{con-second recursive}
Let $\mathbf{P}_i$ be  $(K_i,F_i,Z_i,S_i)$ PDA, $i=0,1,\ldots, m-1$ and $m\geq 2$. Generate an $\sum_{i=0}^{m-1}F_i\times \sum_{i=0}^{m-1}K_i$ array
\begin{eqnarray}
\label{eq-recursive construction}
\mathbf{P}=\left(\begin{array}{cccc}
\mathbf{P}_{0}&*              &\ldots   & *               \\
*             &\mathbf{P}_{1} &\ldots   & *               \\
\vdots        &\vdots         &\ddots   &\vdots               \\
*             &*              &\cdots   &\mathbf{P}_{m-1}
\end{array}\right).
\end{eqnarray}
It is not difficult to verify from  the definition  that $\mathbf{P}$ is a $(\sum_{i=0}^{m-1}K_i,\sum_{i=0}^{m-1}F_i,\sum_{i=0}^{m-2}F_i+Z_{m-1},S)$ PDA
with $S=\max \{S_i, 0\le i<m\}$ if $F_i-Z_i=F_0-Z_0$ for any $0\le i<m$.
\end{construction}

\begin{lemma}\label{co-recursive-optimal}
Let $\mathbf{P}_i$ be an optimal $(K_i,F_i,Z_i,S_i)$ PDA with $S_i= \left\lceil\frac{(F_i-Z_i)K_i}{F_i}\right\rceil+(F_i-Z_i)-1$ for all $0\le i<m$ and $m\ge 2$. If $F_i-Z_i=F_0-Z_0$ and $S=S_i$ for any $0\le i<m$, then $\mathbf{P}$ in \eqref{eq-recursive construction} is an  optimal $(\sum_{i=0}^{m-1}K_i,\sum_{i=0}^{m-1}F_i,\sum_{i=0}^{m-2}F_i+Z_{m-1},S)$ PDA.
\end{lemma}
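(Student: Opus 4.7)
The plan is to sandwich the relevant $\mathcal{S}$-value between the upper bound $S$ supplied by Construction \ref{con-second recursive} and a matching lower bound supplied by Corollary \ref{cor-case-1}. Let $d=F_0-Z_0$ denote the common value of the $F_i-Z_i$, and set $K=\sum_{i=0}^{m-1} K_i$, $F=\sum_{i=0}^{m-1} F_i$, $Z=\sum_{i=0}^{m-2}F_i+Z_{m-1}$, so that $F-Z=F_{m-1}-Z_{m-1}=d$.

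For the upper bound, the condition $F_i-Z_i=d$ for all $i$ is exactly what Construction \ref{con-second recursive} requires, so $\mathbf{P}$ is a valid $(K,F,Z,\max_i S_i)$ PDA. Because $S_i=S$ for every $i$, this already gives $\mathcal{S}(K,F,Z)\le S$, and the bulk of the work lies in the matching lower bound.

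Applying Corollary \ref{cor-case-1} to the parameters $(K,F,Z)$ yields
\begin{equation*}
\mathcal{S}(K,F,Z)\ge \left\lceil \frac{dK}{F}\right\rceil + d - 1,
\end{equation*}
while by hypothesis $S=\left\lceil \frac{dK_i}{F_i}\right\rceil + d - 1$ for every $i$. Setting $a=\left\lceil \frac{dK_i}{F_i}\right\rceil$, which is forced to be a common value by $S_i=S$, optimality reduces to showing $\left\lceil \frac{dK}{F}\right\rceil\ge a$. From $\left\lceil \frac{dK_i}{F_i}\right\rceil=a$ one gets $dK_i>(a-1)F_i$ for each $i$; summing over $i$ gives $dK>(a-1)F$, whence $\left\lceil \frac{dK}{F}\right\rceil\ge a$, as required.

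The only delicate point is the direction of the inequality in this last step. A naive route that starts from $dK_i/F_i\le a$ (the weak upper bound implied by $\lceil dK_i/F_i\rceil=a$) and sums over $i$ ends at $dK/F\le a$, which is useless here. The correct move is to exploit the \emph{strict} lower bound $dK_i/F_i>a-1$ that is equally forced by the ceiling being exactly $a$; this survives summation and supplies the bound on $\lceil dK/F\rceil$ needed to meet $S$.
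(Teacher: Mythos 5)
Your proof is correct and follows essentially the same route as the paper: the upper bound comes from Construction \ref{con-second recursive}, and the matching lower bound from Corollary \ref{cor-case-1} applied to the combined parameters. The only difference is the final arithmetic step, where the paper invokes the mediant inequality $\frac{\sum_i K_i}{\sum_i F_i}\ge \min_i \frac{K_i}{F_i}$ at an index attaining the minimum, whereas you sum the strict inequalities $dK_i>(a-1)F_i$ forced by the common ceiling value; both correctly yield $\left\lceil \frac{dK}{F}\right\rceil\ge a$ and hence optimality.
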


\begin{proof} In order to prove the optimality of $\mathbf{P}$, it is sufficient to show  $\mathcal{S}(\sum_{i=0}^{m-1}K_i$, $\sum_{i=0}^{m-1}F_i$, $\sum_{i=0}^{m-2}F_i+Z_{m-1})\geq S$ by \eqref{Eqn_Def_S}. According to \eqref{eq-sepeical case},
\begin{eqnarray}\label{eq-fuhe-bound1}
&&\mathcal{S}\left(\sum_{i=0}^{m-1}K_i,\sum_{i=0}^{m-1}F_i,\sum_{i=0}^{m-2}F_i+Z_{m-1}\right) \nonumber\\
&\geq &\left\lceil\frac{(\sum_{i=0}^{m-1}F_i-(\sum_{i=0}^{m-2}F_i+Z_{m-1}))\sum_{i=0}^{m-1}K_i}{\sum_{i=0}^{m-1}F_i}
\right\rceil+\left(\sum_{i=0}^{m-1}F_i-(\sum_{i=0}^{m-2}F_i+Z_{m-1})\right)-1\nonumber\\
&=&\left\lceil\frac{(F_{m-1}-Z_{m-1})\sum_{i=0}^{m-1}K_i}{\sum_{i=0}^{m-1}F_i}\right\rceil+ (F_{m-1}-Z_{m-1})-1
\end{eqnarray}
There exists some integers $0\le j<m$ such that $\frac{K_j}{F_j}=\min_{0\le i<m} \frac{K_i}{F_i}$.
Then, we have
\begin{eqnarray}\label{eq-fuhe-bound2}
\frac{K_0+K_1+\ldots+K_{m-2}+K_{m-1}}{F_0+F_1+\ldots+F_{m-2}+F_{m-1}}\geq\frac{K_{j}}{F_{j}}
\end{eqnarray}

Substituting \eqref{eq-fuhe-bound2} into \eqref{eq-fuhe-bound1}, we obtain
\begin{eqnarray*}
\mathcal{S}\left(\sum_{i=0}^{m-1}K_i,\sum_{i=0}^{m-1}F_i,\sum_{i=0}^{m-2}F_i+Z_{m-1}\right)\geq  \left\lceil\frac{(F_j-Z_j)K_j}{F_j}\right\rceil+(F_j-Z_j)-1=S
\end{eqnarray*}
where in the last identity we use the fact that $S=S_j= \left\lceil\frac{(F_j-Z_j)K_j}{F_j}\right\rceil+(F_j-Z_j)-1$ and $F_j-Z_j=F_{m-1}-Z_{m-1}$.
\end{proof}

Consider the optimal $(k,{k \choose t},{k-1\choose t-1},{k\choose t+1})$ PDA in Theorem \ref{th-AN-Y} and
optimal $({k \choose t},k,t,{k\choose t+1})$ PDA in Theorem \ref{th-Ali-optimal}.  It can be seen that
they satisfy $\mathcal{S}(K,F,Z)=\left\lceil\frac{(F-Z)K}{F}\right\rceil+(F-Z)-1$ if $t=0$ or $t=k-2$,
which are corresponding to $(k,1,0,k)$ PDA, $(1,k,0,k)$ PDA, $(k, {k\choose 2},{k-1\choose 2},k)$  PDA, and $({k\choose 2},k,k-2,k)$  PDA respectively.
Applying Construction \ref{con-second recursive} to these PDAs, we get the following optimal ones by Lemma  \ref{co-recursive-optimal}.
\begin{theorem}\label{co-Z=0}
For any positive integers $k$ and $m$, there exists 1) optimal $(mk,m,m-1,k)$ PDA; 2) optimal $(m,mk,(m-1)k,k)$ PDA;
3) optimal $(mk$, $m{k\choose 2}$, $m{k\choose 2}-k+1$, $k)$ PDA ; and 4) optimal $(m{k\choose 2},mk,mk-2,k)$ PDA.
\end{theorem}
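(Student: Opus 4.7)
The plan is to derive each of the four optimal PDAs as a direct corollary of Lemma \ref{co-recursive-optimal} applied to $m$ identical copies of an appropriate base PDA drawn from Theorems \ref{th-AN-Y} and \ref{th-Ali-optimal}. Specifically, the four cases will use the following base PDAs: (a) the AN PDA with $t=0$, which is a $(k,1,0,k)$ PDA (a single row of distinct integers); (b) its transpose, a $(1,k,0,k)$ PDA; (c) the AN PDA with $t=k-2$, which is the $(k,\binom{k}{2},\binom{k-1}{2},k)$ PDA; and (d) its transpose, a $(\binom{k}{2},k,k-2,k)$ PDA. Each is optimal by Theorem \ref{th-AN-Y} or Theorem \ref{th-Ali-optimal}.

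First, I would verify the key equality $S_0 = \lceil (F_0-Z_0)K_0/F_0 \rceil + (F_0-Z_0) - 1$ required by Lemma \ref{co-recursive-optimal} for each base PDA. For (a), $F_0-Z_0=1$ and $\lceil k/1 \rceil + 0 = k$. For (b), $F_0-Z_0=k$ and $\lceil k/k \rceil + k-1 = k$. For (c), the identity $\binom{k}{2}-\binom{k-1}{2}=k-1$ gives $F_0-Z_0=k-1$ and $\lceil (k-1)k/\binom{k}{2}\rceil + k - 2 = 2 + k - 2 = k$. For (d), $F_0-Z_0=2$ and $\lceil 2\binom{k}{2}/k \rceil + 1 = (k-1)+1 = k$. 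In all four cases $S_0 = k$, and the equality holds.

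Next, I would apply Construction \ref{con-second recursive} with $\mathbf{P}_0 = \mathbf{P}_1 = \cdots = \mathbf{P}_{m-1}$ equal to the chosen base PDA. Because all copies are identical, the compatibility conditions $F_i - Z_i = F_0 - Z_0$ and $S = S_i$ of Lemma \ref{co-recursive-optimal} are trivially satisfied, and the resulting array is an $(mK_0,\, mF_0,\, (m-1)F_0 + Z_0,\, k)$ PDA that is optimal. A short algebraic check then matches each of the four advertised parameter sets: cases (a), (b), (d) require only substitution, while case (c) uses $(m-1)\binom{k}{2}+\binom{k-1}{2} = m\binom{k}{2}-(k-1)$.

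There is essentially no obstacle beyond bookkeeping, since the technical content (the lower bound in Corollary \ref{cor-case-1} together with the pigeonhole/recursive argument already used in Lemma \ref{co-recursive-optimal}) has been established. The only care needed is the verification of the equality hypothesis $S_0=\lceil(F_0-Z_0)K_0/F_0\rceil+(F_0-Z_0)-1$ for each base PDA and the algebraic simplification of $(m-1)F_0+Z_0$ in case (c).
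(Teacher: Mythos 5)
Your proposal is correct and follows essentially the same route as the paper: the paper likewise takes the AN PDA and its transpose at $t=0$ and $t=k-2$, observes that each satisfies $S=\lceil(F-Z)K/F\rceil+(F-Z)-1$, and applies Construction \ref{con-second recursive} with $m$ copies, concluding optimality via Lemma \ref{co-recursive-optimal}. Your explicit verification of the equality hypothesis and the simplification $(m-1)\binom{k}{2}+\binom{k-1}{2}=m\binom{k}{2}-(k-1)$ is exactly the bookkeeping the paper leaves implicit.
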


\begin{example}
\label{ex-three optimal PDAs}
1) When $m=2$ and $k=2$, from Theorem \ref{co-Z=0}-1) and Theorem \ref{co-Z=0}-2), an optimal $(4,2,1,2)$ PDA and an optimal $(2,4,2,2)$ PDA can be obtained as follows by using $\mathbf{P}_{1\times 2}$ generated by \eqref{Eqn_Def_AN} and $\mathbf{P}^\top_{1\times 2}$ generated by \eqref{Eqn_Def_AN_T} respectively.
\begin{eqnarray*}
\mathbf{P}_{1\times 2}=\left(\begin{array}{cc}
0& 1
\end{array}\right)\ \
\mathbf{P}_{2\times 4}=\left(\begin{array}{cc|cc}
0&1&*&*\\
\hline
*&*&0&1
\end{array}\right)\ \ \ \ \ \ \ \ \ \ \ \
\mathbf{P}^\top_{1\times 2}=\left(\begin{array}{c}
0\\
1
\end{array}\right)\ \
\mathbf{P}_{4\times 2}=\left(\begin{array}{c|c}
0&*\\
1&*\\
\hline
*&0\\
*&1
\end{array}\right)
\end{eqnarray*}

2) When $m=2$ and $k=4$, from Theorem \ref{co-Z=0}-3) and Theorem \ref{co-Z=0}-4), an optimal $(8,12,9,4)$ PDA and an optimal $(12,8,6,4)$ PDA can be obtained in the following by using $\mathbf{P}_{6\times 4}$ in Example \ref{E-pda} and $\mathbf{P}^\top_{6\times 4}$ in Example \ref{ex-(6,4,2,4) PDA} respectively.
\begin{eqnarray*}
\label{Epdm}
\mathbf{P}_{12\times 8}=\left(\begin{array}{cccc|cccc}
*&*&0&1&*&*&*&*\\
*&0&*&2&*&*&*&*\\
*&1&2&*&*&*&*&*\\
0&*&*&3&*&*&*&*\\
1&*&3&*&*&*&*&*\\
2&3&*&*&*&*&*&*\\
\hline
*&*&*&*&*&*&0&1\\
*&*&*&*&*&0&*&2\\
*&*&*&*&*&1&2&*\\
*&*&*&*&0&*&*&3\\
*&*&*&*&1&*&3&*\\
*&*&*&*&2&3&*&*
\end{array}\right)\ \ \ \ \
\mathbf{P}_{8\times 12}=\left(\begin{array}{cccccc|cccccc}
*&*&*&0&1&2&*&*&*&*&*&*\\
*&0&1&*&*&3&*&*&*&*&*&*\\
0&*&2&*&3&*&*&*&*&*&*&*\\
1&2&*&3&*&*&*&*&*&*&*&*\\
\hline
*&*&*&*&*&*&*&*&*&0&1&2\\
*&*&*&*&*&*&*&0&1&*&*&3\\
*&*&*&*&*&*&0&*&2&*&3&*\\
*&*&*&*&*&*&1&2&*&3&*&*
\end{array}\right)
\end{eqnarray*}
\end{example}

Moreover, one special case of $S_i\geq \left\lceil\frac{(F_i-Z_i)K_i}{F_i}\right\rceil+(F_i-Z_i)-1$ in
Construction \ref{con-second recursive} can be discussed in the following theorem.

\begin{theorem}\label{co-existance Z=F-3}
For any positive integer $F>10$ with $5\nmid F$, there exists an optimal $(F,F,F-3,6)$ PDA .
\end{theorem}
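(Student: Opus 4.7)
My plan is to sandwich $\mathcal{S}(F,F,F-3)$ between the lower bound from Corollary \ref{co-K=F lower bound} and an explicit $(F,F,F-3,6)$ PDA built by diagonal stacking via Construction \ref{con-second recursive}. First, for the lower bound, set $Z=F-3$ so that $2F-2Z-1=5$ and $(F-Z)F=3F$; since $5\nmid F$ we obtain $(2F-2Z-1)\nmid(F-Z)F$, and Corollary \ref{co-K=F lower bound} yields $\mathcal{S}(F,F,F-3)\geq 2(F-Z)=6$. Hence any $(F,F,F-3,6)$ PDA is automatically optimal, and the theorem reduces to an existence statement.

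For the existence part, I would apply Construction \ref{con-second recursive} to a family of blocks $(K_i,F_i,F_i-3,S_i)$ all sharing the common gap $F_i-Z_i=3$. That construction outputs a $(\sum_i K_i,\sum_i F_i,\sum_i F_i-3,\max_i S_i)$ PDA, so to reach the target I must arrange $\sum_i K_i=\sum_i F_i=F$ and $\max_i S_i\leq 6$. The fundamental block is the optimal $(4,4,1,6)$ PDA from Theorem \ref{th-Ali-optimal} with $k=4,t=1$; its $m$-fold diagonal stacking produces the optimal $(4m,4m,4m-3,6)$ PDA, which handles every $F\equiv 0\pmod 4$. For the remaining three residues modulo $4$, I plan to introduce two further square base blocks, $(6,6,3,6)$ and $(7,7,4,6)$. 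With these in hand, every $F>10$ admits a non-negative decomposition $F=4a+6b+7c$ (take $F=4a+7$ when $F\equiv 3$, $F=4a+6$ when $F\equiv 2$, and $F=4a+6+7$ when $F\equiv 1$ modulo $4$), and Construction \ref{con-second recursive} then delivers the desired PDA of maximum symbol value $\max(6,6,6)=6$.

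The main obstacle will be exhibiting the two auxiliary base blocks $(6,6,3,6)$ and $(7,7,4,6)$, since neither is produced by the AN PDA (Theorem \ref{th-AN-Y}), its transpose (Theorem \ref{th-Ali-optimal}), Theorem \ref{th-optimal-frist Ali}, or Theorem \ref{co-Z=0}: all of those families with $F_i-Z_i=3$ and $S_i\leq 6$ have $K_i\leq F_i$ with equality only at $(4,4,1,6)$, which alone cannot produce every admissible $F$. I expect the two missing blocks to require explicit, small-scale combinatorial design; their sizes are modest enough to permit construction by hand and direct verification of conditions C1 and C2 in Definition \ref{def-PDA}, possibly guided by symmetric structures such as resolvable designs, Latin-square-type layouts, or an ad hoc assignment of the six symbols over the $18$ non-star cells so that each integer's occurrence block induces exactly the required complementary star pattern. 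Once these blocks are supplied, the decomposition of $F$ combined with Construction \ref{con-second recursive} completes the existence argument, and the lower bound from Corollary \ref{co-K=F lower bound} seals the optimality.
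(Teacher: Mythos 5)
Your overall strategy is exactly the paper's: the lower bound $\mathcal{S}(F,F,F-3)\ge 6$ from Corollary \ref{co-K=F lower bound} (since $2F-2Z-1=5$ and $5\nmid 3F$ precisely when $5\nmid F$), followed by diagonal stacking via Construction \ref{con-second recursive} of square blocks with $F_i-Z_i=3$ and $S_i=6$, using a $4\times 4$, a $6\times 6$ and a $7\times 7$ block to cover the residues of $F$ modulo $4$; your decompositions $F=4a$, $4a+6$, $4a+7$, $4a+6+7$ coincide with the paper's cases, and your $(4,4,1,6)$ block is indeed supplied by Theorem \ref{th-Ali-optimal} with $k=4$, $t=1$.

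The gap is the one you yourself flag and then leave open: the existence of a $(6,6,3,6)$ PDA and of a $(7,7,4,6)$ PDA. As you correctly note, these follow from none of the earlier results, and the lower bound only gives $\mathcal{S}(6,6,3)\ge 6$ and $\mathcal{S}(7,7,4)\ge 6$; it does not rule out that these minima exceed $6$, in which case your covering of the residues $1,2,3 \pmod 4$ would collapse. Saying the blocks are small enough to be built ``by hand'' is a plan, not a proof: achievability of $S=6$ at these sizes under conditions C1 and C2 genuinely has to be exhibited. The paper closes exactly this point by displaying explicit arrays $\mathbf{P}_{6\times 6}$ and $\mathbf{P}_{7\times 7}$ (along with an explicit $\mathbf{P}_{4\times 4}$) whose PDA property can be checked directly; once you supply such arrays, the rest of your argument is complete and matches the published proof. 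One bookkeeping point worth making explicit: Construction \ref{con-second recursive} requires at least two blocks and a common value of $F_i-Z_i$, which your decompositions do satisfy for every $F>10$ (e.g.\ $F=13$ uses the $6\times 6$ and $7\times 7$ blocks alone).
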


\begin{proof}  Since  $5 \nmid F$, it follows from Corollary \ref{co-K=F lower bound} that
\begin{eqnarray}
\label{eq-Z=F-3-bound}
\mathcal{S}(F,F,F-3)\geq 6
\end{eqnarray}
So the following PDAs are optimal.
\begin{eqnarray*}
\label{eq-K=F}
\mathbf{P}_{4\times 4}=\left(\begin{array}{cccc}
0 & 3 & 5 & *\\
1 & 4 & * & 5\\
2 & * & 4 & 3\\
* & 2 & 1 & 0\\
\end{array}\right)\ \ \
\mathbf{P}_{6\times 6}=\left(\begin{array}{cccccc}
0 & 3 & 5 & * & * & *\\
1 & 4 & * & 5 & * & *\\
2 & * & 4 & * & 3 & *\\
* & 2 & * & * & 0 & 5\\
* & * & 1 & 0 & * & 3\\
* & * & * & 2 & 1 & 4
\end{array}\right)\ \ \
\mathbf{P}_{7\times 7}=\left(\begin{array}{ccccccc}
0 & 3 & 5 & * & * & * & *\\
1 & 4 & * & 5 & * & * & *\\
2 & * & * & * & 3 & 5 & *\\
* & 2 & * & * & 0 & * & 5\\
* & * & 1 & 0 & * & * & 4\\
* & * & 2 & * & * & 0 & 3\\
* & * & * & 2 & 1 & 4 & *
\end{array}\right)
\end{eqnarray*}
Herein, we use above optimal PDAs $\mathbf{P}_{4\times 4}$, $\mathbf{P}_{6\times 6}$ and $\mathbf{P}_{7\times 7}$ to
construct infinite optimal $(F,F,F-3,6)$ PDAs.
\begin{itemize}
  \item[(i)] Construct $(4m,4m,4m-3,6)$ PDA by applying Construction \ref{con-second recursive}
  to $\mathbf{P}_i = \mathbf{P}_{4\times 4}$, where $0 \leq i \leq m-1$.
  \item[(ii)] Generate $(4m+6,4m+6,4m+3,6)$ PDA ($(4m+7,4m+7,4m+4,6)$ PDA respectively)
  by setting $\mathbf{P}_i = \mathbf{P}_{4\times 4}$, $0 \leq i \leq m-1$,
  and $\mathbf{P}_m = \mathbf{P}_{6\times 6}$ ($\mathbf{P}_m = \mathbf{P}_{7\times 7}$ respectively) in
  Construction \ref{con-second recursive}.
  \item [(iii)]  Construct $(4m+13,4m+13,4m+10,6)$ PDA by applying Construction \ref{con-second recursive}
  to $\mathbf{P}_i = \mathbf{P}_{4\times 4}$, $0 \leq i \leq m-1$, $\mathbf{P}_m = \mathbf{P}_{6\times 6}$,
  $\mathbf{P}_{m+1} = \mathbf{P}_{7\times 7}$.
\end{itemize}
\end{proof}

In addition, it is possible to get more optimal PDAs by combining Constructions \ref{con-first recursive}
and \ref{con-second recursive}.

\begin{corollary}\label{co-comb}
There exists an optimal $(\frac{n^2}{2}, n,n-2,n+2)$ for any positive even integer $n$.
\end{corollary}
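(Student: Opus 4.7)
The plan is to realise the target parameters by splitting the $K=n^{2}/2$ columns into two blocks sharing the same $F$ and $Z$, feeding each block through a different earlier result, and then gluing them side by side with Construction~\ref{con-first recursive}; optimality will then be confirmed against the bound of Theorem~\ref{th-digui-bound}.

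First I would nail down the lower bound. With $F=n$, $Z=n-2$, and $K=n^{2}/2$ one has $F-Z=2$, so the telescoping ceiling in Theorem~\ref{th-digui-bound} collapses to just two terms:
\begin{equation*}
\mathcal{S}(n^{2}/2,n,n-2)\ \ge\ \left\lceil\tfrac{(F-Z)K}{F}\right\rceil+\left\lceil\tfrac{1}{F-1}\left\lceil\tfrac{(F-Z)K}{F}\right\rceil\right\rceil\ =\ n+\left\lceil\tfrac{n}{n-1}\right\rceil\ =\ n+2,
\end{equation*}
using $n/(n-1)\in(1,2]$ for every even $n\ge 2$.

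Next I would assemble two optimal PDAs, both with $F=n$ and $Z=n-2$. Theorem~\ref{th-Ali-optimal} applied at $k=n$, $t=n-2$ supplies an optimal $(\binom{n}{2},n,n-2,n)$ PDA, call it $\mathbf{Q}_{1}$. Part~2 of Theorem~\ref{co-Z=0} applied at $m=n/2$, $k=2$ (legitimate since $n$ is even, and this is built from $n/2$ copies of the $(1,2,0,2)$ PDA through Construction~\ref{con-second recursive}) supplies an optimal $(n/2,n,n-2,2)$ PDA, call it $\mathbf{Q}_{2}$. Because $\mathbf{Q}_{1}$ and $\mathbf{Q}_{2}$ agree in $F$ and $Z$, Construction~\ref{con-first recursive} with $m=2$ concatenates them (after shifting the integer alphabet of $\mathbf{Q}_{2}$ by $n$) into a single $\bigl(\binom{n}{2}+\tfrac{n}{2},\,n,\,n-2,\,n+2\bigr)$ PDA. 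The identity $\binom{n}{2}+\tfrac{n}{2}=\tfrac{n^{2}}{2}$ converts this to the advertised parameters, and since the achieved $S=n+2$ matches the lower bound above, the array is optimal.

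The main obstacle is spotting the right decomposition. The transpose-AN block contributes $\binom{n}{2}$ columns and $n$ integers, so the second block must supply exactly $n/2$ further columns while contributing only $2$ new integers; both requirements together are unusually stringent, and the fact that Theorem~\ref{co-Z=0}-2 delivers precisely such a $(n/2,n,n-2,2)$ PDA (no wider and no narrower) is what makes the rates add up to $n+2$ instead of something larger. Once this dovetailing is observed, the parameter bookkeeping and verification of the PDA axioms are immediate from Constructions~\ref{con-first recursive} and~\ref{con-second recursive}.
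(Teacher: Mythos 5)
Your proposal is correct and follows essentially the same route as the paper: the paper also sets $\mathbf{P}_0$ to be the optimal $\left(\binom{n}{2},n,n-2,n\right)$ PDA from Theorem~\ref{th-Ali-optimal} ($k=n$, $t=n-2$) and $\mathbf{P}_1$ the optimal $(\tfrac{n}{2},n,n-2,2)$ PDA from Theorem~\ref{co-Z=0}-2) ($m=n/2$, $k=2$), glues them by Construction~\ref{con-first recursive} as $(\mathbf{P}_0,\mathbf{P}_1+n)$, and cites \eqref{eq-digui-bound} for optimality. Your explicit evaluation of that bound as $n+\left\lceil \tfrac{n}{n-1}\right\rceil=n+2$ just spells out the step the paper leaves implicit.
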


\begin{proof}
Set $\mathbf{P}_0$ to be the optimal $(\frac{n(n-1)}{2}, n,n-2,n)$ PDA  with integer set $\{0,1,\cdots,n-1\}$
obtained by Theorem \ref{th-Ali-optimal} in place of $k=n$ and $t=n-2$,
and $\mathbf{P}_1$ to be the optimal $(\frac{n}{2}, n,n-2,2)$ PDA  with integer set $\{0,1\}$
obtained by Theorem \ref{co-Z=0}-2) in place of $m=n/2$ and $k=2$.
Then, we can construct an  $(\frac{n^2}{2}, n,n-2,n+2)$ PDA  $\mathbf{P}=(\mathbf{P}_0,\mathbf{P}_1+n)$
based on Construction \ref{con-first recursive}, which is optimal by \eqref{eq-digui-bound}.
\end{proof}

\begin{example}\rm
\label{ex-h-F=Z-2}
According to Corollary \ref{co-comb}, when $n=4$, we can construct an optimal $(8, 4,2,6)$ PDA by
using $\mathbf{P}^\top_{6\times 4}$ in Example \ref{ex-(6,4,2,4) PDA} and  $\mathbf{P}_{4\times 2}$ in  Example \ref{ex-three optimal PDAs}-1).
\begin{eqnarray*}
\mathbf{P}_{4\times 8}=\left(\begin{array}{cccccc|ccc}
 *&* &* &0 &1 &2  &4 &* \\
*&0 &1 &* &* &3  &5 &* \\
0&* &2 &* &3 &*  &* &4\\
1&2 &* &3 &* &*  &* &5
 \end{array}\right)
\end{eqnarray*}
\end{example}

\section{Further results on optimal PDAs}
\label{se-existence optimal PDA}

First of all, we state a simple fact about a $(K,F,Z,S)$ PDA.
\begin{fact}
The resultant array is still a $(K-\kappa,F,Z,S')$ PDA for an integer $0<S'\le S$ if we delete $\kappa$ columns.
\end{fact}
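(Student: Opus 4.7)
The proof plan is essentially a direct verification that the two defining axioms of a PDA are inherited under the operation of deleting columns, together with an accounting of the number of distinct integer symbols that survive. Since nothing deep is required, I expect the argument to fit comfortably in one short paragraph, and I would structure it by walking through the conditions of Definition \ref{def-PDA} in order.

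First, I would fix a $(K,F,Z,S)$ PDA $\mathbf{P}$ and let $\mathbf{P}''$ denote the $F \times (K-\kappa)$ array obtained by deleting any $\kappa$ chosen columns. For condition C1, observe that deletion does not touch the entries of surviving columns, so each of the remaining $K-\kappa$ columns still contains exactly $Z$ stars. For condition C2, take any two equal integer entries $p''_{i_1,j_1}=p''_{i_2,j_2}=s$ in $\mathbf{P}''$. These entries occupy the same positions in $\mathbf{P}$ (under the natural inclusion of the surviving column indices), so C2 applied to $\mathbf{P}$ forces $i_1 \ne i_2$, $j_1 \ne j_2$, and $p_{i_1,j_2}=p_{i_2,j_1}=*$; since both columns $j_1, j_2$ survive, exactly the same $2 \times 2$ star pattern appears inside $\mathbf{P}''$. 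Hence C2 is inherited.

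It remains to verify $0 < S' \le S$, where $S'$ is the number of distinct integers appearing in $\mathbf{P}''$. The set of integers in $\mathbf{P}''$ is a subset of $\{0,1,\ldots,S-1\}$, giving $S' \le S$ after the cosmetic relabeling that renumbers the surviving integers consecutively as $0,1,\ldots,S'-1$ (this relabeling is a bijection on the alphabet and manifestly preserves both C1 and C2). For the strict positivity $S' > 0$, note that the fact is only meaningful when $K - \kappa \ge 1$ and $F > Z$; then any surviving column contributes $F - Z \ge 1$ integer entries, so at least one integer appears. The only real care-point, therefore, is the bookkeeping of this relabeling step, but it is purely notational and presents no genuine obstacle.
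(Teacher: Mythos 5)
Your verification is correct and is exactly the routine argument the paper has in mind: the paper states this as a ``simple fact'' without proof, and your check that C1 and C2 are inherited by the surviving columns, together with the relabeling of the surviving integers to $0,1,\ldots,S'-1$ and the observation that $S'\le S$ (with $S'>0$ whenever a column remains and $F>Z$), is the intended straightforward justification. No discrepancies to report.
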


In the sequel, we present more optimal PDAs by applying this fact  to the two aforementioned recursive constructions respectively.

\subsection{Deleting method in the first recursive construction}
\label{sub-Deleting in first recursive construction}
\begin{theorem}\label{co-Z=F-1}
There exists an optimal $(K,F,F-1,\lceil {K\over F}\rceil)$ PDA  for any positive integers $K$ and $F$.
\end{theorem}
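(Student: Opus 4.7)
The lower bound $\mathcal{S}(K,F,F-1)\ge \lceil K/F\rceil$ is already supplied by \eqref{eq-Z=F-1}, so the whole task reduces to exhibiting a $(K,F,F-1,\lceil K/F\rceil)$ PDA that attains it. Following the theme of this subsection, my plan is to start from a known optimal PDA whose width is a multiple of $F$ and trim it down to $K$ columns by deletion.

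Concretely, Theorem \ref{co-Z=0}-1), instantiated with $m=F$ and $k=\lceil K/F\rceil$, supplies an optimal $(F\lceil K/F\rceil,F,F-1,\lceil K/F\rceil)$ PDA; call it $\mathbf{Q}$. Since $F\lceil K/F\rceil\ge K$, the Fact stated at the beginning of Section \ref{se-existence optimal PDA} allows me to discard $F\lceil K/F\rceil-K$ columns of $\mathbf{Q}$ and obtain a $(K,F,F-1,S')$ PDA with $S'\le \lceil K/F\rceil$. Combined with the lower bound $S'\ge\lceil K/F\rceil$, this forces $S'=\lceil K/F\rceil$, which is exactly the bound to be matched.

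The only point that requires care is the possibility that deletion strips out all copies of some integer and thereby drops $S'$ strictly below $\lceil K/F\rceil$. This is automatically excluded by the lower bound, but to keep the argument constructive I would exploit the block structure of $\mathbf{Q}$: when $\mathbf{Q}$ is built by Construction \ref{con-first recursive} from $\lceil K/F\rceil$ integer-shifted copies of the trivial $(F,F,F-1,1)$ PDA, its columns split into $\lceil K/F\rceil$ contiguous blocks of width $F$, with block $i$ using only the integer $i$. Writing $K=qF+r$ with $0\le r<F$, the number of surplus columns is $F-r\le F-1$ (and is zero when $r=0$), so I would delete all of them from a single block, say the last one. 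The integer $\lceil K/F\rceil-1$ of that block still survives in at least $\max\{r,1\}\ge 1$ columns, while every other integer is completely untouched, so the resulting array is a $(K,F,F-1,\lceil K/F\rceil)$ PDA.

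Thus the theorem follows immediately once this small bookkeeping is carried out; the ``main obstacle,'' to the extent there is one, is simply making the choice of deleted columns explicit so that no integer is wiped from the array — everything else is a direct combination of Theorem \ref{co-Z=0}-1), the Fact, and \eqref{eq-Z=F-1}.
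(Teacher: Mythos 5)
Your proposal is correct and takes essentially the same route as the paper: the paper starts from the optimal $(mF,F,F-1,m)$ PDA of Theorem \ref{th-optimal-frist Ali} (Construction \ref{con-first recursive} applied to $m=\lceil K/F\rceil$ copies of the $(F,F,F-1,1)$ PDA, exactly the width-$F$ block structure you describe in your constructive paragraph), deletes the last $F-\kappa$ columns, and invokes Corollary \ref{cor-case-1}. Your extra observation that the lower bound \eqref{eq-Z=F-1} alone forces $S'=\lceil K/F\rceil$ is a nice simplification that makes the choice of deleted columns immaterial, and it also papers over the small mismatch of initially citing Theorem \ref{co-Z=0}-1) (whose diagonal blocks have width $\lceil K/F\rceil$, not $F$) before switching to the Construction \ref{con-first recursive} build for the bookkeeping.
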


\begin{proof} Assume that $K=(m-1)F+\kappa$ for integers $m\ge 1$ and $0\le \kappa<F$. Let $\mathbf{P}$ be the optimal  $(mF,F,F-1,m)$ PDA
obtained by Theorem \ref{th-optimal-frist Ali} in place of $k=F$ and $t=F-1$.
Note that the original $(F,F,F-1,1)$ contains only one integer and thus
the last $F$ columns of $\mathbf{P}$ in \eqref{eq-first-recursive} only have one integer. Then, we can get a $(K,F,F-1,m)$ PDA if
delete the last $F-\kappa$ columns, which is optimal by Corollary \ref{cor-case-1} since $\mathcal{S}(K,F,Z)=\lceil\frac{mF-\kappa}{F}\rceil=m$.
\end{proof}

\begin{example}\rm
\label{ex-Z=F-1}
From Theorem \ref{co-Z=F-1}, when $F=3$ and $K=5$,
we can first obtain an optimal $(6,3,2,2)$ PDA  by Theorem \ref{th-optimal-frist Ali} in place of $m=2$, $k=3$ and $t=2$ as follows.
\begin{eqnarray*}
\mathbf{P}_{3\times 6}=\left(\begin{array}{ccc|ccc}
 0     &* &* &1      &* &* \\
 *&0      &* &* &1     &* \\
 *&* &0      &* &* &1
 \end{array}\right)
\end{eqnarray*}
Then an optimal $(5,3,2,2)$ PDA can be obtained by deleting the last column of $\mathbf{P}_{3\times 6}$.
\begin{eqnarray*}
\mathbf{P}_{3\times 5}=\left(\begin{array}{ccc|ccc}
 0     &* &* &1      &* \\
 *&0      &* &* &1     \\
 *&* &0      &* &*
 \end{array}\right)
\end{eqnarray*}
\end{example}

\begin{theorem}\label{co-Z=1}
There exists an optimal $(K,F,1,m{F\choose 2}-\frac{(F-\kappa)(F-\kappa-1)}{2})$ PDA for any positive integers $K$ and $F$
satisfying  $K=(m-1)F+\kappa$ with $m\ge 1$ and $0\le \kappa<F$.
\end{theorem}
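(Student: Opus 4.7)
The plan is to construct the desired PDA by deleting columns from the concatenation of transpose AN PDAs supplied by Construction \ref{con-first recursive}, and then to verify optimality by matching the lower bound in Theorem \ref{th-digui-bound}.

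First, I would take the optimal $(mF,F,1,m\binom{F}{2})$ PDA obtained from Theorem \ref{th-optimal-frist Ali} with $k=F$ and $t=1$; by Construction \ref{con-first recursive} it is the horizontal concatenation of $m$ shifted copies of the transpose AN PDA $\mathbf{P}^\top$ from \eqref{Eqn_Def_AN_T}, whose columns are indexed by the singletons $\{0\},\{1\},\ldots,\{F-1\}$. Invoking the Fact, I would delete from the last block the $F-\kappa$ columns indexed by $\{\kappa\},\{\kappa+1\},\ldots,\{F-1\}$, producing a $(K,F,1,S)$ PDA with $K=(m-1)F+\kappa$. To read off $S$, note that by \eqref{Eqn_Def_AN_T} the nonstar entries of column $\{i\}$ are exactly $\{f_2(\{i,j\}):j\ne i\}$, so the surviving labels in the last block correspond, via the bijection $f_2$, to the $2$-subsets of $[0,F)$ meeting $[0,\kappa)$---and there are $\binom{F}{2}-\binom{F-\kappa}{2}$ of these. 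The first $m-1$ blocks contribute all $\binom{F}{2}$ of their (disjoint) labels, so after relabeling to $[0,S)$ we obtain $S=m\binom{F}{2}-\tfrac{(F-\kappa)(F-\kappa-1)}{2}$, as claimed.

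For optimality, the plan is to apply Theorem \ref{th-digui-bound} with $Z=1$ and match the resulting nested-ceiling expression to $S$. Writing the $i$-th summand as $a_i=(m-1)(F-1-i)+c_i$ with $c_0=\kappa$ and $c_i=\lceil\tfrac{F-1-i}{F-i}c_{i-1}\rceil$, a short induction on $i$ should yield $c_i=\min(\kappa,F-1-i)$: when $c_{i-1}<F-i$ the ceiling preserves the value (since $0<c_{i-1}/(F-i)<1$), and when $c_{i-1}=F-i$ it drops by exactly one. Reindexing $j=F-1-i$ then gives $\sum_{i=0}^{F-2}a_i=(m-1)\binom{F}{2}+\sum_{j=1}^{F-1}\min(\kappa,j)=(m-1)\binom{F}{2}+\tfrac{\kappa(2F-\kappa-1)}{2}=m\binom{F}{2}-\binom{F-\kappa}{2}$, which matches the $S$ of the construction.

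The main obstacle is the final step: extracting the closed form for the nested ceilings in Theorem \ref{th-digui-bound}. Once the ansatz $c_i=\min(\kappa,F-1-i)$ is isolated---by peeling off the integer part $(m-1)(F-1-i)$ from the fractional remainder carrying $\kappa$---the induction is immediate; the rest of the argument is routine bookkeeping about which labels of the concatenated transpose AN PDA survive the column deletion.
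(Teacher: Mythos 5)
Your proposal is correct and follows essentially the same route as the paper: delete the last $F-\kappa$ columns of the concatenated transpose AN PDA (so exactly the $\binom{F-\kappa}{2}$ labels supported entirely on those columns disappear) and match the result against the nested-ceiling lower bound of Theorem \ref{th-digui-bound} with $Z=1$. Your closed form $c_i=\min(\kappa,F-1-i)$ is just a compact restatement of the two-case ceiling identity the paper uses, so the arguments coincide.
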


\begin{proof}
Let $\mathbf{P}_0$ be the original optimal $(F,F,1,{F\choose 2})$ PDA given in Theorem \ref{th-Ali-optimal} in place of  $k=F$ and $t=1$.
Note from  \eqref{Eqn_Def_AN_T}  that the last $F-\kappa$ columns in $\mathbf{P}_0$ contain exactly $\frac{(F-\kappa)(F-\kappa-1)}{2}$ integers which occur twice, so does
$\mathbf{P}$ in \eqref{eq-first-recursive}. That is, we can get an $(K,F,F-1,S')$ PDA $\mathbf{P}'$ by deleting the last $F-\kappa$ columns of $\mathbf{P}$ where
\begin{equation*}
S'=m{F\choose 2}-\frac{(F-\kappa)(F-\kappa-1)}{2}
\end{equation*}
Now we only need to show $\mathcal{S}(K,F,1)=S'$ by \eqref{eq-digui-bound} as follows
\begin{eqnarray*}\label{Z=1--minS}
\mathcal{S}(K,F,1)&\geq& \left\lceil\frac{(F-1)K}{F}\right\rceil+\left\lceil\frac{(F-2)}{F-1}\left\lceil\frac{(F-1)K}{F}\right\rceil\right\rceil+
\ldots+
\left\lceil\frac{1}{1+1}\left\lceil\frac{2}{1+2}\left\lceil\ldots\left\lceil\frac{(F-1)K}{F}\right\rceil \ldots\right\rceil\right\rceil\right\rceil\\
&=&((F-1)(m-1)+\kappa)+((F-2)(m-1)+\kappa)+\cdots+(\kappa(m-1)+\kappa)+((\kappa-1)(m-1)+\kappa-1)+\\
&&\cdots+(m-1+1)\\
&=&\frac{F(F-1)}{2}(m-1)+\kappa(F-\kappa)+\frac{\kappa(\kappa-1)}{2}\\
&=&m{F\choose 2}-\frac{(F-\kappa)(F-\kappa-1)}{2}
\end{eqnarray*}
where in the first identity we recursively use
\begin{eqnarray*}
\left\lceil\frac{(F-i-1)}{F-i}\left\lceil\frac{(F-i)}{F-i+1}\cdots \left\lceil\frac{(F-1)K}{F}\right\rceil\right\rceil\right\rceil=\left\{\begin{array}{ll}
(F-i-1)(m-1)+\kappa,& 0\le i< F-\kappa\\
(F-i-1)(m-1)+F-i-1, & F-\kappa\le i<F-1
\end{array}
\right.
\end{eqnarray*}
\end{proof}

\begin{example}\rm
\label{ex-Z=1}
When $F=4$ and $K=7$,
we can first obtain the following  optimal $(4,4,1,6)$ PDA in Theorem \ref{th-Ali-optimal} in place of  $k=4$ and $t=1$.
\begin{eqnarray*}
\mathbf{P}_{4\times 4}=\left(\begin{array}{cccc}
 0    &1     &  2  & * \\
 3    &4     &* &2     \\
 5    &*&4      &1      \\
*&5     &3      &0
\end{array}\right)
\end{eqnarray*}
Then an optimal $(7,4,1,12)$ PDA can be obtained by Theorem \ref{co-Z=1}.
\begin{eqnarray*}
\mathbf{P}_{4\times 7}=\left(\begin{array}{cccc|cccc}
 0    &1     &  2    &* & 6     & 7 & 8\\
 3    &4     &* &2      & 9     & 10 & *\\
 5    &*&4      &1      & 11      &* & 10\\
*&5     &3      &0      &* & 11 & 9
\end{array}\right)
\end{eqnarray*}
\end{example}

\subsection{Deleting method in the second recursive construction}

\begin{lemma}\label{co-base optimal PDA}
Suppose that there exists an optimal $(K_0,F_0,Z_0,\frac{(F_0-Z_0)K_0}{F_0}+F_0-Z_0-1)$ PDA with $F_0|(F_0-Z_0)K_0$.
Then there exists an optimal $(mK_0-\kappa,mF_0,(m-1)F_0+Z_0,\frac{(F_0-Z_0)K_0}{F_0}+F_0-Z_0-1)$ PDA  for integers $m\ge 2$ and $0\le \kappa<\frac{mF_0}{F_0-Z_0}$.
\end{lemma}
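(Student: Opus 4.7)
The plan is to chain the block-diagonal expansion of Construction \ref{con-second recursive} with the column-deletion Fact stated at the opening of this section, and then to close the gap with the lower bound in Corollary \ref{cor-case-1}.

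First, I would place $m$ identical copies of the hypothesized optimal $(K_0,F_0,Z_0,S_0)$ PDA, where $S_0=\frac{(F_0-Z_0)K_0}{F_0}+F_0-Z_0-1$, along the diagonal via Construction \ref{con-second recursive}. Since every block has the same dimensions and the common deficit $F_i-Z_i=F_0-Z_0$, the construction is applicable and outputs a $(mK_0,\,mF_0,\,(m-1)F_0+Z_0,\,S_0)$ PDA. Moreover, because $S_i=S_0$ matches the Corollary \ref{cor-case-1} bound for each block, Lemma \ref{co-recursive-optimal} applies and certifies optimality of this auxiliary PDA, although only its existence is needed for the argument that follows.

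Next, I would delete any $\kappa$ columns from this PDA. By the Fact, what remains is a $(mK_0-\kappa,\,mF_0,\,(m-1)F_0+Z_0,\,S')$ PDA for some $0\le S'\le S_0$, so in particular
\begin{equation*}
\mathcal{S}(mK_0-\kappa,\,mF_0,\,(m-1)F_0+Z_0)\le S_0.
\end{equation*}
To obtain the matching lower bound, set $K=mK_0-\kappa$, $F=mF_0$, $Z=(m-1)F_0+Z_0$, so that $F-Z=F_0-Z_0$ and
\begin{equation*}
\frac{(F-Z)K}{F}=\frac{(F_0-Z_0)K_0}{F_0}-\frac{(F_0-Z_0)\kappa}{mF_0}.
\end{equation*}
The hypothesis $F_0\mid(F_0-Z_0)K_0$ makes the first term a nonnegative integer, while the constraint $0\le\kappa<\frac{mF_0}{F_0-Z_0}$ forces the subtracted fractional correction to lie in $[0,1)$. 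Consequently $\lceil(F-Z)K/F\rceil=(F_0-Z_0)K_0/F_0$, and Corollary \ref{cor-case-1} yields $\mathcal{S}(K,F,Z)\ge(F_0-Z_0)K_0/F_0+F_0-Z_0-1=S_0$. Combining this with the upper bound pins $S'=S_0$, so the column-deleted array is the desired optimal PDA.

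The only delicate point is the ceiling computation, and there the sharp threshold $\kappa<mF_0/(F_0-Z_0)$ is exactly what is required so that deleting $\kappa$ columns does not cause $\lceil(F-Z)K/F\rceil$ to drop by one; a relaxation to $\kappa\le mF_0/(F_0-Z_0)$ would weaken the lower bound by $1$ and break the matching. Notice also that the argument is entirely insensitive to \emph{which} $\kappa$ columns are removed: the lower bound on $\mathcal{S}$ does the work of guaranteeing that no integer label is completely erased from the array, so no combinatorial case analysis on the deletion pattern is necessary.
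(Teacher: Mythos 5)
Your proposal is correct and follows essentially the same route as the paper: build the block-diagonal $(mK_0,mF_0,(m-1)F_0+Z_0,S_0)$ PDA via Construction \ref{con-second recursive}, delete $\kappa$ columns, verify $\left\lceil\frac{(F_0-Z_0)(mK_0-\kappa)}{mF_0}\right\rceil=\frac{(F_0-Z_0)K_0}{F_0}$ for $\kappa<\frac{mF_0}{F_0-Z_0}$, and invoke Corollary \ref{cor-case-1} to match the upper bound. Your explicit sandwich argument pinning $S'=S_0$ (so no integer label can vanish under deletion) is just a slightly more careful rendering of what the paper asserts directly.
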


\begin{proof}
Let $\mathbf{P}$ be the optimal $(mK_0,mF_0,(m-1)F_0+Z_0,\frac{(F_0-Z_0)K_0}{F_0}+F_0-Z_0-1)$ PDA obtained by Construction  \ref{con-second recursive} based on the same optimal $(K_0,F_0,Z_0,\frac{(F_0-Z_0)K_0}{F_0}+F_0-Z_0-1)$ PDA.
Given a nonnegative integer $\kappa$, delete the last $\kappa$ columns of $\mathbf{P}$ to form an $(mK_0-\kappa,mF_0,(m-1)F_0+Z_0,\frac{(F_0-Z_0)K_0}{F_0}+F_0-Z_0-1)$ PDA $\mathbf{P}'$.
Note that $\left\lceil\frac{(F_0-Z_0)(mK_0-\kappa)}{mF_0}\right\rceil=\frac{(F_0-Z_0)K_0}{F_0}$ when $\kappa<\frac{mF_0}{F_0-Z_0}$.
Then, it follows from Corollary \ref{cor-case-1} that
\begin{eqnarray*}
\mathcal{S}(mK_0-\kappa,F_0,(m-1)F_0+Z_0)\geq \frac{(F_0-Z_0)K_0}{F_0}+(F_0-Z_0)-1
\end{eqnarray*}
Therefore, $\mathbf{P}'$ is optimal according to \eqref{Eqn_Def_S}.
\end{proof}

Combining Theorem \ref{co-Z=0} and Lemma \ref{co-base optimal PDA}, we get two optimal PDAs.
\begin{theorem} \label{co-t=k-2-OPDA}
For any positive integers $k$ and $m\ge 2$, there exists an optimal $(mk-\kappa,m{k \choose k-2},m{k \choose k-2}-k+1,k)$ PDA and an optimal $(m{k\choose 2}-\kappa,mk,mk-2,k)$ PDA for each integer $0\le \kappa<\frac{mk}{2}$.
\end{theorem}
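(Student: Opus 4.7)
The plan is to obtain each of the two families by invoking Lemma \ref{co-base optimal PDA} once, with base PDAs taken from the $t=k-2$ instance of the AN PDA and of its transpose, respectively. Doing so requires identifying the correct base parameters $(K_0,F_0,Z_0,S_0)$ and verifying the two hypotheses of the lemma: that $S_0=\frac{(F_0-Z_0)K_0}{F_0}+F_0-Z_0-1$ and that $F_0\mid (F_0-Z_0)K_0$. Once these are in place, the conclusion of Lemma \ref{co-base optimal PDA} immediately produces an optimal $(mK_0-\kappa,\,mF_0,\,(m-1)F_0+Z_0,\,S_0)$ PDA for every integer $0\le\kappa<\frac{mF_0}{F_0-Z_0}$.

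For the first family $(mk-\kappa,\,m{k\choose 2},\,m{k\choose 2}-k+1,\,k)$, I would choose $K_0=k$, $F_0={k\choose 2}$, $Z_0={k\choose 2}-k+1$, so that $F_0-Z_0=k-1$. Applying Theorem \ref{th-AN-Y} with $t=k-2$ yields an optimal $(k,{k\choose 2},{k-1\choose 2},k)$ PDA, and the elementary identity ${k-1\choose 2}={k\choose 2}-k+1$ identifies it as the desired base. Then $\frac{(F_0-Z_0)K_0}{F_0}=\frac{(k-1)k}{{k\choose 2}}=2$, so $2+(k-1)-1=k=S_0$ matches, and the divisibility ${k\choose 2}\mid 2{k\choose 2}$ is immediate. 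Lemma \ref{co-base optimal PDA} then produces the claimed PDAs, and its admissible range $0\le\kappa<\frac{mF_0}{F_0-Z_0}=\frac{mk}{2}$ is exactly the range in the theorem.

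For the second family $(m{k\choose 2}-\kappa,\,mk,\,mk-2,\,k)$, I would take $K_0={k\choose 2}$, $F_0=k$, $Z_0=k-2$, and use as base the transpose AN PDA supplied by Theorem \ref{th-Ali-optimal} with $t=k-2$, which is an optimal $({k\choose 2},k,k-2,k)$ PDA. Here $F_0-Z_0=2$ gives $\frac{(F_0-Z_0)K_0}{F_0}=\frac{2{k\choose 2}}{k}=k-1$, so $(k-1)+2-1=k=S_0$, while the divisibility $k\mid 2{k\choose 2}=k(k-1)$ is immediate. Lemma \ref{co-base optimal PDA} again delivers the family, with the same admissible range $0\le\kappa<\frac{mk}{2}$.

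Essentially the entire proof is substitution into Lemma \ref{co-base optimal PDA}; the only nontrivial move is recognising the $t=k-2$ AN PDA (and its transpose) as the correct base in each case, which hinges on the identity ${k-1\choose 2}={k\choose 2}-k+1$. If there is any obstacle, it is purely cosmetic: verifying that the $\kappa=0$ specialisations recover precisely the optimal $(mk,m{k\choose 2},m{k\choose 2}-k+1,k)$ and $(m{k\choose 2},mk,mk-2,k)$ PDAs already given in parts~3) and~4) of Theorem \ref{co-Z=0}, which they do by direct parameter comparison.
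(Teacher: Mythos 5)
Your proposal is correct and matches the paper's argument: the paper likewise obtains both families by feeding the optimal $(k,{k\choose 2},{k-1\choose 2},k)$ AN PDA and its optimal $({k\choose 2},k,k-2,k)$ transpose (equivalently, the $\kappa=0$ PDAs of Theorem \ref{co-Z=0}-3) and 4)) into Lemma \ref{co-base optimal PDA}. Your parameter checks, including $S_0=\frac{(F_0-Z_0)K_0}{F_0}+F_0-Z_0-1=k$, the divisibility $F_0\mid (F_0-Z_0)K_0$, and the range $0\le\kappa<\frac{mk}{2}$, are exactly the verifications the paper's terse proof relies on.
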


\begin{example}\label{ex-optimal (4,6,3,4)PDA}
Based on an optimal $(4,6,3,4)$ PDA listed in Example \ref{E-pda}, there exists an optimal $(4m-x,6m,6m-3,4)$ PDA for any positive integers $m$ and $x<2m$ by Theorem \ref{co-t=k-2-OPDA}. When $m=3$ we have an optimal $(12,18,15,4)$ PDA  $\mathbf{P}$. And we can also obtain an optimal PDA by deleting the last $1$, $2$, $3$, $4$ and $5$ columns of $\mathbf{P}$ respectively.
\end{example}

Finally, by deleting some columns of the optimal PDAs in Theorem \ref{co-Z=0} and Corollary \ref{co-comb}, we are able to get optimal PDAs with more choices of the parameters $K$ and $F$.
\begin{theorem}\label{th-F=Z-2}
Suppose that $F$ and $K$ are positive integers with $F|K$.
Then there exists an optimal $(K, F, F-2, S)$ PDA  for any integer $F^3\geq 2K^2$,
with $S=\frac{2K}{F}+1$ if $(\frac{2K}{F}+1)|F$ and $S=\frac{2K}{F}+2$ otherwise.
\end{theorem}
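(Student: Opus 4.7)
Set $m:=K/F$, a positive integer since $F\mid K$, so that $2K/F=2m$ and the hypothesis $F^{3}\ge 2K^{2}$ becomes $F\ge 2m^{2}$.

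For the lower bound, Corollary \ref{cor-case-1} immediately yields $\mathcal{S}(K,F,F-2)\ge 2m+1$. When $(2m+1)\nmid F$, I would verify the two hypotheses of Corollary \ref{co-tighter S(K,F,F-2)}: the inequality $K=mF\le F(F-1)/2$ follows from $F\ge 2m+1$ (a fortiori from $F\ge 2m^{2}$), while $\lceil 2K/F\rceil F=2mF<\lceil 2mF/(2m+1)\rceil(2m+1)$ reduces, via $\gcd(2m,2m+1)=1$, to $(2m+1)\nmid F$. This gives the sharper bound $\mathcal{S}(K,F,F-2)\ge 2m+2$. In the case $(2m+1)\mid F$, writing $F=\ell(2m+1)$, the statement is matched by Theorem \ref{co-t=k-2-OPDA} with $k=2m+1$ and $\kappa=0$ for $\ell\ge 2$, or by Theorem \ref{th-Ali-optimal} with $(k,t)=(2m+1,2m-1)$ for $\ell=1$; either produces an optimal $(mF,F,F-2,2m+1)$ PDA.

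For $(2m+1)\nmid F$, my plan is to assemble a $(mF,F,F-2,2m+2)$ PDA via Construction \ref{con-second recursive} from two workhorse blocks, both having matching ratio $K_i/F_i=m$ and $F_i-Z_i=2$: block $\mathbf{C}=(2m^{2},2m,2m-2,2m+2)$ from Corollary \ref{co-comb} with $n=2m$, and block $\mathbf{D}=(m(2m+1),2m+1,2m-1,2m+1)$ from Theorem \ref{th-Ali-optimal}. Whenever $F=2m\,a+(2m+1)\,b$ for integers $a\ge 1,\ b\ge 0$, the Construction \ref{con-second recursive} combination of $a$ copies of $\mathbf{C}$ and $b$ copies of $\mathbf{D}$ yields the desired PDA, with $\max_i S_i=2m+2$ guaranteed by the presence of at least one $\mathbf{C}$. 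Writing $F=2mq+b_{0}$ with $0\le b_{0}<2m$ and noting that $F\ge 2m^{2}$ forces $q\ge m$, a short Sylvester--Frobenius analysis shows such a representation exists whenever $b_{0}\le q-1$. Hence the only remaining cases are the $m(m-1)/2$ exceptional $F$ satisfying $m\le q<b_{0}\le 2m-1$.

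The main technical obstacle is handling these exceptions. For each such $F$, my proposal is to use $q$ copies of $\mathbf{C}$ together with one auxiliary $(mb_{0},b_{0},b_{0}-2,\le 2m+2)$ PDA combined via Construction \ref{con-second recursive}. For $b_{0}=m+1$, two-copy horizontal concatenation (Construction \ref{con-first recursive}) of the optimal $(\binom{m+1}{2},m+1,m-1,m+1)$ PDA from Theorem \ref{th-Ali-optimal} furnishes the auxiliary, producing $(m(m+1),m+1,m-1,2m+2)$ with the correct parameters. For $b_{0}\in[m+2,2m-1]$, which arise only when $m\ge 3$, I would combine Construction \ref{con-first recursive}, Construction \ref{con-second recursive}, and column deletion (Fact 1) applied to the base PDAs $(\binom{b_{0}}{2},b_{0},b_{0}-2,b_{0})$ and smaller ones, decomposing $b_{0}$ into two or more pieces if needed. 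Verifying that such an auxiliary can always be constructed with $S\le 2m+2$, so that the final $\max_i S_i$ indeed equals the lower bound $2m+2$ from Step~1, is where I expect the principal bookkeeping difficulty of the proof.
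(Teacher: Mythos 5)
Your lower-bound step and your generic construction coincide with the paper's: the paper also proves optimality via Corollary \ref{cor-case-1} and Corollary \ref{co-tighter S(K,F,F-2)}, handles $(2m+1)\mid F$ with the block-diagonal transpose-AN PDA (its Theorem \ref{co-Z=0}-4)), and for $(2m+1)\nmid F$ stacks copies of the $(2m^2,2m,2m-2,2m+2)$ PDA of Corollary \ref{co-comb} and the $(m(2m+1),2m+1,2m-1,2m+1)$ PDA of Theorem \ref{th-Ali-optimal} via Construction \ref{con-second recursive}. But your proof is not complete: exactly the residues you call exceptional, $m\le q<b_0\le 2m-1$ (nonempty for every $m\ge 3$), are left unproved beyond $b_0=m+1$, and your fallback plan does not obviously work. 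The paper closes this case with one further idea you are missing: a \emph{third} vertical block with $2m+2$ rows, namely the $(m(2m+2),2m+2,2m,2m+2)$ PDA obtained by deleting $m+1$ columns from the optimal $((m+1)(2m+1),2m+2,2m,2m+2)$ PDA of Theorem \ref{th-Ali-optimal} (column deletion as in the Fact of Section \ref{se-existence optimal PDA}). With blocks of row-sizes $2m+1$ and $2m+2$, every exceptional $F=2mq+b_0$ with $q<b_0<2m$ decomposes as $F=(2q-b_0)(2m+1)+(b_0-q)(2m+2)$ with both coefficients positive (using $b_0\le 2m-1\le 2q-1$), and since each block has exactly $m$ times as many columns as rows, Construction \ref{con-second recursive} directly gives the required $(mF,F,F-2,2m+2)$ PDA; no leftover block of $b_0<2m$ rows ever appears.

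The reason this matters is that your proposed auxiliary $(mb_0,b_0,b_0-2,S\le 2m+2)$ PDAs for $b_0\in[m+2,2m-1]$ are genuinely problematic, not mere bookkeeping. Take $m=3$, $q=3$, $b_0=5$ (so $F=23$): you would need a $(15,5,3,S\le 8)$ PDA. The natural horizontal split into the transpose-AN $(10,5,3,5)$ PDA plus a $(5,5,3,S'\le 3)$ PDA is impossible, since Corollary \ref{co-K=F lower bound} gives $\mathcal{S}(5,5,3)\ge 4$; three horizontal copies of a $(5,5,3,4)$ PDA give $S=12$; and a $(10,5,3,4)$ PDA does not exist (each of the $4$ symbols would have to occur once in every row, which violates condition C2 when every column carries two integers). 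Vertical splits of the $b_0$ rows fare no better: a piece with $r<2m$ rows, $mr$ columns and $Z=r-2$ is forced toward large $S$ (e.g.\ $\mathcal{S}(3m,3,1)=3m>2m+2$ for $m\ge 3$ by Theorem \ref{co-Z=1}, and a $2$-row piece would need $\mathcal{S}(2m\cdot 1\cdot\,\cdot)=$ the trivial $Z=0$ value $4m$). So without the $2m+2$-row block (or an equivalent new construction) the exceptional cases remain open, and this is precisely the content of case (iii) in the paper's proof.
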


\begin{proof}
Let $n = \frac{K}{F}$.
Since $F^3\geq 2K^2$ implies $F\geq 2n^2$, we can write $F=2an +b$, where $0\leq b < 2n$ and $a \ge n$,
i.e., $b < 2a$. Then, we only need to show that  there exists an optimal $(n(2an+b), 2an+b,2an+b-2,S)$ PDA  for any integer $b \in [0,2a)$,
with $S=2n+1$ if $b=a$ and $S=2n+2$ otherwise, which is divided  into three cases.
\begin{itemize}
  \item[(i)] When $b =a$, the optimal $(n(2an+a), 2an+a,2an+a-2, 2n+1)$ PDA can be obtained for any positive integers $a$ and $n$
   from Theorem \ref{co-Z=0}-4) in place of $m=a$ and $k=2n+1$.
  \item[(ii)] When $b \in [0,a)$, let $\mathbf{P}'$ be the optimal $(2n^2, 2n,2n-2,2n+2)$ in Corollary \ref{co-comb}.
  Let $\mathbf{P}''$ be the optimal $(n(2n+1), 2n+1,2n-1,2n+1)$ PDA in
  Theorem \ref{th-Ali-optimal} in place of $k=2n+1$ and $t=2n-1$. In Construction \ref{con-second recursive},
 let $\mathbf{P}_i = \mathbf{P}'$, $0 \leq i \leq a-b-1$, and $\mathbf{P}_{i}= \mathbf{P}''$, $a-b \leq i\leq a-1$,
 we can construct an $(n(2an+b), 2an+b,2an+b-2,2n+2)$ PDA which is optimal by Corollary \ref{co-tighter S(K,F,F-2)}.

  \item[(iii)] When $b \in (a,2a)$,  let $\mathbf{P}'''$ be the $(n(2n+2), 2n+2,2n,2n+2)$ PDA generated
  by deleting $n+1$ columns from  the optimal $((n+1)(2n+1), 2n+2,2n,2n+2)$ PDA
  in Theorem \ref{th-Ali-optimal} in place of $k=2n+2$ and $t=2n$, which is optimal by Corollary \ref{co-tighter S(K,F,F-2)}.
  In  Construction \ref{con-second recursive},
 set $\mathbf{P}_i = \mathbf{P}'''$, $0 \leq i \leq b-a-1$, and $\mathbf{P}_{i}= \mathbf{P}''$, $b-a \leq i \leq a-1$,
we can generate an $n(2an+b), 2an+b,2an+b-2,2n+2)$ PDA which is optimal by Corollary \ref{co-tighter S(K,F,F-2)}.
\end{itemize}

\end{proof}

\begin{example}\label{ex-h&(h+1)&(h+2)-F=Z-2}
From Theorem \ref{th-F=Z-2}, there exist: optimal $(F,F,F-2,S)$ PDA  for any integer $F\geq 2$,
  where $S=3$ when $3|F$ or $S=4$ when $3\nmid F$;  optimal $(2F,F,F-2,S)$ PDA  for any integer $F\geq 8$,
  where $S=5$ when $5|F$ or $S=6$ when $5\nmid F$;  optimal $(3F,F,F-2,S)$ PDA  for any integer $F\geq 18$,
  where $S=7$ when $6|F$ or $S=8$ when $6\nmid F$.
\end{example}

From Example \ref{ex-h&(h+1)&(h+2)-F=Z-2}, it is not difficult to see  that Theorem \ref{th-F=Z-2} in fact gives optimal PDAs with $F|K$ if $F$ is proper large.

\section{Conclusion}\label{c3conclusion}
In this paper, lower bounds on PDA were derived. With respect to the new bounds,  some new proposed PDAs are able to
shown to optimal. Particularly, optimal PDAs with $Z=1$ and $F-1$ for any positive integers $K$ and $F$ were obtained, and
several infinite classes of optimal PDAs with $Z=F-3,F-2$ were constructed.

Further, it would be of particular interest if we could find a tighter lower bound on $\mathcal{S}(K,F,Z)$ for some other $Z$s and more optimal PDAs.
The readers are invited to join the adventure.


\begin{thebibliography}{1}

\bibitem{AA}
K. C. Almeroth and M. H. Ammar, The use of multicast delivery to provide a scalable and interactive video-on-demand
service, {\em IEEE Journal on Selected Areas in Communications}, vol. 14, no. 6, pp. 1110-1122, 1996.

\bibitem{AG}
M. M. Amiri and D. G\"{u}nd\"{u}z, Fundamental limits of caching: Improved delivery rate-cache capacity trade-off, arXiv:1604.03888v1 [cs.IT], Apr. 2016.





\bibitem{CFL}
Z. Chen, P. Fan, and K. B. Letaief, Fundamental limits of caching: Improved bounds for small buffer users, {\em The Institution of Engineering and Technology}, vol. 10, no. 17, pp. 2315-2318, 2016.

\bibitem{GR}
H. Ghasemi and A. Ramamoorthy, Improved lower bounds for coded caching, in Proc. {\em IEEE International Symposium on Information Theory}, Hong Kong, Jun. 2015, pp. 1696-1700.

\bibitem{GMDC}
N. Golrezaei, A. F. Molisch, A. G. Dimakis, and G. Caire, Femtocaching and device-to-device collaboration: A new
architecture for wireless video distribution, {\em  IEEE Communications Magazine}, vol. 51, no. 4, pp. 142-149, 2013.

\bibitem{GGMG}
M. Gregori, J. Gomez-Vilardebo, J. Matamoros, and D. G\"{u}nd\"{u}z, Wireless content caching for small cell and D2D
networks, to appear, {\em IEEE Journal on Selected Areas in Communications}, 2016.



\bibitem{JCM}
M. Ji, G. Caire, and A. F. Molisch, Fundamental Limits of Caching in Wireless D2D Networks, {\em IEEE Transactions on Information Theory}, vol. 62, no. 2, pp. 849-869, 2016.

\bibitem{JTLC}
M. Ji, A. M. Tulino, J. Llorca, and G. Caire, Caching-aided coded multicasting with multiple random requests, in {\em IEEE Information Theory Workshop}, Korea, Oct. 2015, pp. 1-5.

\bibitem{KNMD}
N. Karamchandani, U. Niesen, M. A. Maddah-Ali, and S. Diggavi, Hierarchical coded caching, in Proc. {\em IEEE International Symposium on Information Theory}, Honolulu, Jun. 2014, pp. 2142-2146.


\bibitem{MN1}
M. A. Maddah-Ali and U. Niesen, Decentralized Coded Caching Attains Order-Optimal Memory-Rate Tradeoff, {\em EEE/ACM Transactions on Networking}, vol. 23, no. 4, pp.1029-1040, 2015.


\bibitem{AN}
M. A. Maddah-Ali and U. Niesen, Fundamental limits of caching, {\em IEEE Transactions on Information Theory},
vol. 60, no. 5, pp. 2856-2867, 2014.





\bibitem{RAN}
R. Pedarsani, M. A. Maddah-Ali, and U. Niesen, Online coded caching, in {\em  IEEE International Conference on Communications}, Sydney, Australia, Jun. 2014, pp. 1878-1883.


\bibitem{STC}
A. Sengupta, R. Tandon, and T. C. Clancy, Improved approximation of storage-rate tradeoff for caching via new outer
bounds, in Proc. {\em IEEE International Symposium on Information Theory}, Hong Kong, Jun. 2015, pp. 1691-1695.


\bibitem{SZG}
C. Shangguan, Y. Zhang, and G. Ge, Centralized coded caching schemes: A hypergraph theoretical approach, arXiv: 1608.03989 [cs.IT], Aug. 2016.


\bibitem{T}
C. Tian and J. Chen,
Caching and delivery via interference elimination, in Proc. {\em IEEE International Symposium on Information Theory}, Barcelona, Aug. 2016, pp. 830-834.

\bibitem{WTP}
K. Wan, D. Tuninetti, and P. Piantanida, On the optimality of uncoded cache placement, arXiv:1511.02256v1  [cs.IT], Nov. 2015.
\bibitem{WLG}
C. Y. Wang, S. H. Lim, and M. Gastpar, A new converse bound for coded caching, arXiv:1601.05690 [cs.IT], Jan. 2016.



\bibitem{YCTC}
Q. Yan, M. Cheng, X. Tang, and Q. Chen, On the placement delivery array design in centralized coded caching scheme, arXiv: 1510.05064v1 [cs.IT], Oct. 2015.


\bibitem{YTC}
Q. Yan,  X. Tang, and Q. Chen, On the gap between decentralized and centralized coded caching schemes, arXiv: 1605.04626 [cs.IT], May  2016.

\bibitem{YTCC}
Q. Yan, X. Tang, Q. Chen,  and M. Cheng, Placement delivery array design through strong edge coloring of bipartite graphs, arXiv: 1609.02985 [cs.IT], Sep.  2016.

\bibitem{YUTC}
Q.~Yan, U.~Parampalli, X.~Tang, and Q.~Chen, Online coded caching with random
  access, \emph{IEEE Communications Letters}, 2016.






\bibitem{Xiao2016tree}
L. Zhang, Z. Wang, M. Xiao, G. Wu, and S. Li, Decentralized caching in
  two-layer networks: Algorithms and limits, arXiv: 1606.09076, [cs.IT], Jun. 2016.


\end{thebibliography}
\end{document}